\documentclass{article}
\pdfoutput=1

\usepackage[preprint]{neurips_2026}

\usepackage[utf8]{inputenc} 
\usepackage[T1]{fontenc}  
\usepackage{hyperref}      
\usepackage{url}            
\usepackage{booktabs}      
\usepackage{amsfonts}       
\usepackage{nicefrac}       
\usepackage{microtype}      
\usepackage{subcaption}
\usepackage{xcolor}         
\usepackage{graphicx}
\usepackage{amsmath}
\usepackage{amsthm}
\usepackage{amssymb}

\newtheorem{proposition}{Proposition}
\newtheorem{lemma}{Lemma}

\title{FrOGS: Discrete Neural Sampler for Independent Alloy Configurations Across Chemical Conditions}

\author{%
  Kyucheol Min \quad Elyssa Hofgard \quad Tess Smidt \\
  Massachusetts Institute of Technology \\
  Cambridge, MA 02139 \\
  \texttt{\{mink202, ehofgard, tsmidt\}@mit.edu} \\
}

\begin{document}

\maketitle

\begin{abstract}
Predicting the thermodynamic properties of an alloy requires sampling its configurations across many chemical conditions and recovering free energies on a common absolute scale. Markov chain Monte Carlo (MCMC) is the standard tool, but it requires separate simulations at different conditions, and auxiliary free-energy methods such as thermodynamic integration are used to place results on a common absolute scale. Modern discrete neural samplers typically use reverse KL divergence as the objective and can be mode-seeking or biased. We present Free energy Offering Generative Sampler (FrOGS), a hybrid discrete neural sampler that couples an autoregressive model to a continuous-time Markov chain (CTMC) to be trained jointly under a single shared loss. FrOGS draws i.i.d.\ configurations, returns an unbiased estimate of the partition function, and gives consistent estimates of thermodynamic observables. We train a single model across a wide range of chemical conditions to produce estimates on a common absolute free-energy scale. FrOGS matches exact finite-size results on the 2D Ising model and reference phase diagrams for AgPd and CuAu, without mode collapse. We additionally compare to SEGAL, a published autoregressive baseline, and find that only FrOGS recovers the stability range of the CuAu$_3$ phase.
\end{abstract}

\section{Introduction}

 Sampling the different ways constituent atoms can occupy sites in an alloy according to their energetics is a core task in computational materials science, enabling the prediction of thermodynamic behavior and material properties. The established framework for predicting the properties of an alloy consists of two steps: (1) fitting a computationally inexpensive cluster expansion (CE) to computationally expensive first-principles data such as density functional theory (DFT) calculations~\citep{Sanchez1984, Connolly1983}; (2) sampling with Markov chain Monte Carlo (MCMC) in the semi-grand canonical (SGC) ensemble on the CE to obtain thermodynamic averages. Integrating the averages over the chemical condition $(\Delta \mu, T)$~\citep{vandeWalle2002mc} gives free energies and phase diagrams, where $\Delta \mu$ is the difference in chemical potential between two atomic species and $T$ is the temperature.

MCMC is a well-established method, yet it has known drawbacks. A Markov chain is built to sample one specific target distribution at a given $(\Delta \mu, T)$, and its updates are inherently sequential. Thus, to obtain the phase diagram across chemical conditions, one must do separate MCMC runs. The free energies are then obtained only through an auxiliary calculation that places the separate runs on a common absolute scale~\citep{vandeWalle2002mc}. Additionally, near phase boundaries, MCMC may require more steps to produce reasonable configurations~\citep{berg1991multicanonical, wolff1989collective}. 

Neural samplers have been proposed to address some of the drawbacks of MCMC. However, they can miss entire modes of the target distribution without increasing their loss, a failure known as mode collapse. This is a known consequence of training using the reverse KL divergence~\citep{minka2005divergence, soletskyi2025mode}. To address these limitations, we introduce the Free energy Offering Generative Sampler (FrOGS), which builds upon the Locally Equivariant discrete Annealed Proactive Sampler (LEAPS) framework~\citep{holderrieth2025leaps}.

In \S\ref{sec:related}, we review previous work on neural sampling and complementary MCMC approaches. In \S\ref{sec:LEAPS}, we review background on the LEAPS framework and present our model in \S\ref{sec:method}. In \S\ref{sec:experiments}, we confirm that FrOGS reproduces the exact finite-size thermodynamics of the 2D Ising model and reconstructs the phase diagrams of AgPd and CuAu. We additionally compare to a prior autoregressive model~\citep{damewood2022sampling} and find that FrOGS more accurately reconstructs the phase diagrams for AgPd and CuAu.

\begin{figure}
  \centering
   \includegraphics[width=\linewidth]{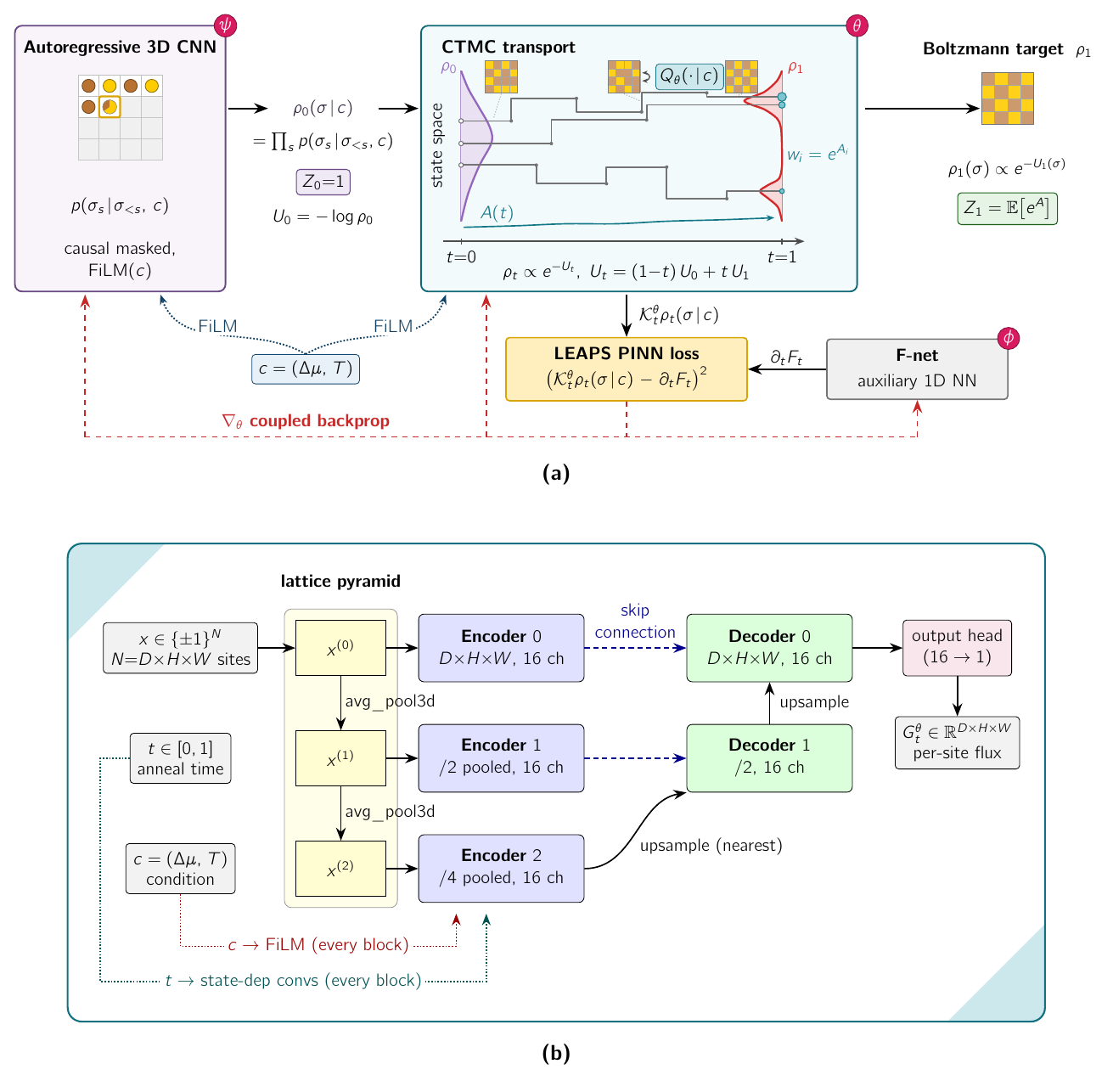}
  \caption{(a) FrOGS model architecture. FrOGS couples the continuous-time Markov chain (CTMC) transport (blue), which is based on the LEAPS algorithm, to an autoregressive network (purple) to flexibly model the prior. The cells with a red circle on the upper right side mark the neural-network-parametrized parts of the framework. (b) Architecture of the rate matrix network for the CTMC transport. The encoder and decoder blocks are described in the main text (\S\ref{sec:arch}).}
  \label{fig:conv_unet_film}
\end{figure}

Our main contributions are:

\begin{itemize}
    \item \textbf{One trained sampler for the entire $c=(\Delta\mu, T)$ region.} A single trained FrOGS model samples microstates across chemical conditions $c$ on a common absolute scale. Feature-wise Linear Modulation (FiLM) conditioning on $c$, which
    provably preserves local equivariance, enables a single model to generate samples at each chemical condition.

    \item \textbf{Hybrid architecture.} FrOGS takes a novel approach of coupling an autoregressive network as a prior to the neural CTMC transport via a shared loss. The prior absorbs the discontinuity at a first-order boundary, enabling the representation of multiple phases with one model, as discussed in \S\ref{sec:phasemix}. The downstream CTMC part is the learned generalization of annealed importance sampling~\citep{Neal2001AIS, holderrieth2025leaps}, thereby refining the proposal the prior makes.

    \item \textbf{Effective sampling in the multimodal regime without mode collapse.} By comparing the free energy at fixed composition (\S\ref{sec:phasediagram}), we recover the miscibility gap of AgPd and the regions of all three ordered phases of CuAu, attesting to the accuracy of the free-energy prediction even near the phase boundaries.

  \item \textbf{Extending the LEAPS framework.} We generalize the LEAPS framework to be applicable to real chemical systems. A locally equivariant multi-scale head lets the network capture long-range interactions, and the reparametrization of the rate matrix avoids the numerically unstable density ratio computation at low temperatures.

\end{itemize}

\section{Related Work}
\label{sec:related}

Autoregressive models are appealing for lattice systems because they
provide exact, normalized likelihoods, making free energies and thermodynamic
observables directly computable~\citep{wu2019van, nicoli2020asymptotically, damewood2022sampling}. Recent work has improved their
scalability through any-order training and marginalization~\citep{du2026scaling}. Their generative process nonetheless remains
autoregressive. The underlying
objective minimizes a reverse KL divergence and is therefore mode-seeking, so the
learned distribution can fail to cover coexisting phases near
transitions~\citep{minka2005divergence, soletskyi2025mode}. Furthermore, their biases are usually corrected by an unlearned reweighting scheme. Our model improves on both axes, since our upstream autoregressive prior does not take the reverse KL divergence as the loss function, and the remaining bias is corrected by importance sampling combined with learned CTMC.

An alternative approach keeps Monte Carlo sampling but reshapes its dynamics to overcome the barriers that slow plain MCMC near phase transitions. Metadynamics~\citep{Laio2002Metadynamics} adds a history-dependent bias alongside a collective variable (a low-dimensional summary of the configuration) so that the system is pushed out of minima it has already visited. Its
``well-tempered'' variant~\citep{Barducci2008WellTempered} adds a smaller bias as the bias accumulates in a region of the phase diagram, so that the total bias converges. The recovered profile of metadynamics is a projection onto the collective variable, so states that the collective variable cannot distinguish can be lumped into a single basin~\citep{Bussi2020Metadynamics}. \citet{du2026metadns} extends well-tempered metadynamics into
discrete neural samplers, flattening the landscape during training, but each model is trained at a fixed thermodynamic condition and recovers a free-energy profile up to an additive constant.

Flat-histogram methods~\citep{berg1991multicanonical, WangLandau2001} take a related approach, directly estimating the density of states in order to obtain a uniform distribution of composition and energy. This delivers the free energy globally on an absolute scale, but the histogram uniformity is imposed only on the
marginal of composition and energy. Atomic configurations can share the same composition and energy, so this method does not balance sampling across these other degrees of freedom~\citep{neuhaus2003crystal, dayal2004performance}. Furthermore, the method does not natively provide independent configurations as the algorithm only uses macroscopic variables.

A separate line of work learns the transport itself, rather than extending MCMC-based methods. Neural networks are trained to move probability mass from a tractable prior to a complex target. \citet{Vaikuntanathan2008Escorted} first showed that augmenting
annealing dynamics~\citep{Neal2001AIS} with a transport field can suppress the variance, while
noting that the field cannot in general be obtained analytically. With the rise of machine learning, learned samplers emerged to approximate the solutions to the transport field. Normalizing flows learn transport
maps toward Boltzmann targets for equilibrium sampling and free-energy
estimation~\citep{Noe2019Boltzmann, Albergo2019Flow, Wirnsberger2020Targeted}.
Closer to our setting, annealed transport samplers keep the annealing path and
interleave a learned flow at each step with a sequential Monte Carlo sampler,
inheriting its asymptotic unbiasedness~\citep{Arbel2021AFT, Matthews2022CRAFT};
the flows are nonetheless trained by a per-transition KL objective, so mode coverage is supplied by the resampling and MCMC steps. A separate line casts sampling as stochastic
optimal control of a diffusion model, learning the control for generating the
target~\citep{ZhangChen2022PIS, Vargas2023DDS, RichterBerner2024}. NETS
\citep{albergo2025nets} keeps the prescribed annealing path but learns a single continuous-time drift trained to minimize the
variance of the weights. LEAPS~\citep{holderrieth2025leaps} carries this construction to
discrete state spaces, which we apply to lattice alloys.

\section{Background}
\label{sec:LEAPS}
 
We build on LEAPS~\citep{holderrieth2025leaps}, introduced in \S\ref{sec:related} within the context of learned transport. Let $\sigma=(\sigma_1,\dots,\sigma_N)$
denote a configuration on a lattice of $N$ sites, where $\sigma_i$ is the species
occupying site $i$ (\S\ref{sec:sgc}). In the framework, a CTMC $(X_t)_{t\in[0,1]}$ with learnable parameters $\theta$ evolves a
configuration $X_t$ under a rate matrix $Q^\theta_t$, transporting mass along the
annealing path $\rho_t=\tfrac{1}{Z_t}e^{-U_t}$ from a tractable prior $\rho_0$
(with $X_0\sim\rho_0$) to the target $\rho_1=p_c$, the Boltzmann distribution at a
chemical condition $c=(\Delta\mu, T)$. $Z_t$ is the normalization constant at each $t$, also known as the partition function. The rate matrix is restricted to
single-site moves: $Q^\theta_t(\tau,i\mid\sigma)\ge 0$ is the rate of the move
$\sigma\mapsto\sigma'=\mathrm{Swap}(\sigma,i,\tau)$ that replaces the species at
site $i$ by $\tau\neq\sigma_i$. Additionally, LEAPS employs importance sampling: at each $t$, each of the $M$ samples is reweighted by a log-weight $A_t \in \mathbb{R}$. The ESS (effective sample size) is defined with the log-weights as
\begin{equation}
    \mathrm{ESS}
    = \frac{\big(\sum_{m=1}^{M} e^{A^{(m)}}\big)^2}
           {M\sum_{m=1}^{M} e^{2A^{(m)}}}\ \in \Big(\tfrac{1}{M},1\Big],
    \label{eq:ess}
\end{equation}
which equals $1$ when all weights are equal (perfect transport) and approaches
$1/M$ when one sample dominates. It is useful because $M\cdot\mathrm{ESS}$
approximates the number of independent draws from $\rho_t$ that would give an
estimator of the same variance as the $M$ weighted ones~\citep{KongLiuWong1994}. We provide further detail of this framework in appendix \ref{sec:leapsth}.

LEAPS offers two main benefits: (a) a direct unbiased estimate of $Z_1$ with the equation 
\begin{equation}
    \mathbb{E}[e^{A_t}]=Z_t/Z_0
    \label{eq:partition_t}
\end{equation}
 and (b) a consistent estimate of any observable through reweighting the samples, as
 
 \begin{equation}
\langle O\rangle = \frac{\mathbb{E}\!\left[e^{A_t}\,O(X_t)\right]}{\mathbb{E}\!\left[e^{A_t}\right]}.
\label{eq:reweighting}
 \end{equation}

LEAPS, however, faces challenges when directly applied to the alloy sampling task. Firstly, as \citet{du2026scaling} point out, the LEAPS family is not conditionable in its bare form; in fact, the LEAPS model in the original paper was trained and evaluated just at the critical temperature~\citep{holderrieth2025leaps}. Secondly, LEAPS starting from a uniform prior fails to transport efficiently across first-order phase boundaries, as shown in the next section. Finally, LEAPS has only been applied to 2D systems on a square grid. The alloy systems we aim to model are 3D, exchange-symmetry-broken, and non-simple-cubic (e.g. fcc). We extend the framework to overcome the stated limitations.

\section{Method}
\label{sec:method}

\subsection{Learned prior}
\label{sec:phasemix}

We first outline how the learned prior allows FrOGS to sample across first-order phase boundaries. As shown in \S\ref{sec:LEAPS}, the LEAPS estimator is unbiased for any rate matrix $Q^\theta_t$. Thus, when sampling fails at a first-order phase boundary, it is a result of the variance of the log-weights $A_t$.

Consider the case where the target $\rho_1$ is ordered but $\rho_0$ is disordered. The annealing path crosses a first-order phase transition at some $t^* \in (0,1)$. Let the set of disordered configurations be
$\mathcal{D}$ and ordered configurations $\mathcal{O}$, and let $\rho_t(\mathcal{O})=\sum_{\sigma\in\mathcal{O}}\rho_t(\sigma)$. In a first-order phase transition near $t^*$, the probability mass will jump from approximately 0 to approximately 1 over a narrow range of $t$. As LEAPS only updates one site at a time, the transport must pass through energetically unstable configurations $\mathcal{B}$ between $\mathcal{D}$ and $\mathcal{O}$. As these configurations are unstable, $\rho_t(\mathcal{B})$ will be exponentially suppressed. Consequently, the transported distribution does not match $\rho_t(\mathcal{O})$ near $t^*$, resulting in high variance of the log-weights. The resulting heavy-tailed weights cause the ESS to collapse, as seen in Eq.~\eqref{eq:ess}.

We observe the ESS collapse in the ordered-disordered system of CuAu with the uniform prior, and we confirm this is not the result of mode collapse (\S\ref{sec:ablation}). To resolve this issue, we replace the fixed prior by a conditioned
autoregressive network $\rho_0(\sigma\mid c)=\prod_i p(\sigma_i\mid\sigma_{<i},c)$ that places mass in the target basin for each $c=(\Delta\mu, T)$, so the local transport does not have to carry the probability mass across first-order boundaries.

We use a PixelCNN-style~\citep{vandenoord2016pixelrnn} causal-masked autoregressive 3D-convolutional network (Fig. \ref{fig:auto}) re-indexed onto the fcc lattice (\S\ref{app:reindex}). Because the equations \eqref{eq:partition_t} and \eqref{eq:reweighting} hold regardless of the prior setting, we can use any prior as long as $Z_0$ is tractable. The tractability is necessary for calculating the target $Z_1$ with Eq.~\eqref{eq:partition_t}. Since it is autoregressive, $Z_0=1$ is satisfied. 

\subsection{Coupling and conditioning the autoregressive and transport networks}
\label{sec:coupling}

The learned prior allows us to jointly train the prior and the transport networks, so that one coupled model can generate samples in the entire $c=(\Delta\mu, T)$ region. We show in \S\ref{sec:ablation} that the performance degrades substantially after removing the trained prior (initializing from a uniform distribution as is done in LEAPS), and after removing the transport (training the prior only with $Q_t^\theta=0$).

The autoregressive and transport networks are coupled by a shared loss when training. A single backward pass through the objective therefore returns gradients for both networks at once (\S\ref{sec:gradient}). Importantly, the advantage of FrOGS is that its loss function is not mode-seeking since the training objective for the autoregressive prior is a generalized form of VarGrad \citep{richter2020vargrad} (\S\ref{sec:vargrad}). 

Both the transport network and the prior network are conditioned on $c=(\Delta\mu, T)$ through FiLM \citep{Perez2018FiLM}.  We explain our detailed setup in \S\ref{sec:film}. FiLM is a general-purpose conditioning method, originally demonstrated on conditioning an image classifier on natural language for visual reasoning. It is per-channel and spatially uniform, so it cannot leak information about future sites. Moreover, it lets the transport network remain locally equivariant, as we show in detail in \S\ref{sec:proof}. 

\subsection{Extending the transport network}
\label{sec:arch}
  
We extend the LEAPS framework in two novel directions. LEAPS models $Q^\theta_t$, the learned rate matrix, as
\begin{gather}
    Q^\theta_t(\tau,i\mid\sigma)=\big[\,G^\theta_t(\tau,i\mid\sigma)\,\big]_+, \quad  \text{where}
    \  [z]_+ \equiv \max(z,0) \ \label{eq:original} \\
    G^\theta_t(\tau,i\mid\sigma)
    = -\,G^\theta_t(\sigma_i,i\mid\sigma'), \quad \text{where } \sigma'=\mathrm{Swap}(\sigma,i,\tau) 
    \label{eq:le}
\end{gather}
for computational efficiency. Eq.~\eqref{eq:le} is referred to as the local equivariance property. The motivation behind this parametrization can be found in \S\ref{sec:leapsth} or the LEAPS paper~\citep[\S 8]{holderrieth2025leaps}.

As in LEAPS, we fulfill the local equivariance condition by factoring the probability flux as
  \begin{equation}
    G^\theta_t(\tau,i\mid\sigma)
    = \big(P^\theta_t(\tau)-P^\theta_t(\sigma_i)\big)^{\!\top} H^\theta_t(i\mid\sigma),
    \label{eq:locequiv}
  \end{equation}
into a token projector $P^\theta_t$ and a locally invariant head $H^\theta_t$ whose output at site $i$ does not depend on the current token $\sigma_i$. Swapping
  $\sigma_i\leftrightarrow\tau$ leaves $H^\theta_t$ untouched and only flips the sign
  of the projector difference, satisfying Eq.~\eqref{eq:le}.

First, we introduce a U-net-style locally equivariant architecture (Fig. \ref{fig:conv_unet_film}) for $H_t^\theta$ that consists of a pyramid of encoders and decoders. Encoder blocks follow the locally equivariant convolutional network construction of LEAPS, with FiLM inserted after each convolution. Decoder (refine) blocks are identical to encoder blocks except at the input: instead of a single lattice-derived feature, they take the concatenation of the upsampled coarse feature and the encoder feature through the skip connection, mixed back to 16 channels by a $1\times1\times1$ convolution before the first FiLM.

This architecture allows for a larger receptive field than stacking locally equivariant convolution layers and shows better performance at an equal scale (\S\ref{sec:training}). This is an example of a more scalable locally equivariant architecture that the original LEAPS paper anticipated~\citep[\S 12]{holderrieth2025leaps}. We prove that it is locally equivariant in \S\ref{sec:proof}.

Second, we reparametrize the rate matrix for the alloy systems. The objective of LEAPS requires the computation of the ratio $\rho_t(\sigma')/\rho_t(\sigma)=e^{-\Delta_{i\tau}U_t}$ between neighboring configurations (i.e. differing by one site), which is then multiplied by $[-G^\theta_t]_+$ before entering the loss function. The challenge is that a large energy difference (e.g. for alloy systems at low temperatures) between the neighboring configurations ($\Delta_{i\tau}U_t\ll 0$)
  makes this ratio diverge.

We address this issue by the below reparametrization with the equilibrium Metropolis acceptance factor $M_t$.
\begin{equation}
  \begin{split}
    \tilde Q^\theta_t(\tau,i\mid\sigma)
    &= \big[G^\theta_t(\tau,i\mid\sigma)\big]_+\,M_t(\tau,i\mid\sigma), \\
    M_t(\tau,i\mid\sigma)
    &\equiv e^{-[\Delta_{i\tau}U_t]_+}
     = \min\!\big(1,\,e^{-\Delta_{i\tau}U_t}\big)\in(0,1].
  \end{split}
  \label{eq:reparam}
\end{equation}
After substituting Eq.~\eqref{eq:reparam} into the loss function of LEAPS and using the elementary identity $[a]_+-[-a]_+=a$ together with the detailed balance relation, the corrector becomes a bounded, single-pass expression in which the density ratio
  $e^{-\Delta_{i\tau}U_t}$ does not appear (full derivation in \S\ref{sec:derivation}). 

  This reparametrization is not a restriction. Since $M_t>0$ is a
  $\theta$-independent rescaling of each move, the reparametrized
  rates~\eqref{eq:reparam} span the same family of CTMCs as $[G^\theta_t]_+$, so the
  universal-representation property~\citep[Prop.~8.1]{holderrieth2025leaps} is preserved.

\section{Experiments}
\label{sec:experiments}

We report the results of our sampler in the main text for the $L=20$ 2D Ising model, 125-site AgPd, and 128-site CuAu systems, matching the largest cells reported by the discrete neural samplers we benchmark against~\citep{damewood2022sampling, du2026scaling}, and compare it with the previous works in detail in \S\ref{sec:comp}.

All of the reported results below are from a FrOGS model trained for 200k steps for each chemical system. We injected 10 local MCMC steps per time step during training, which is a feature of the LEAPS algorithm, to ensure better coverage. We select this number based on the hyperparameter sweeps (\S\ref{sec:training}). We did not inject MCMC steps at each step in $t$ during evaluation to verify the power of our sampler alone. We construct the phase diagram using the method outlined in \S\ref{sec:phasediagram}.

For AgPd and CuAu, we sample from the SGC ensemble (\S\ref{sec:sgc}) under the cluster expansion parameters from \citet{damewood2022sampling}. We train on the same CE data from \citet{damewood2022sampling} and \citet{du2026scaling} in order to fairly compare our phase diagrams.

We remark that the thermodynamics under a particular CE can be considerably different from the real phase diagram. As an example, for CuAu, the CE reproduces the $\text{L1}_2/\text{L1}_0/\text{L1}_2$ orderings but shifts their transition temperatures (e.g. Cu$_3$Au above 700 K versus 663 K experimentally~\citep{okamoto1987aucu}). All results below should therefore be read as the thermodynamics of the given CE. Our code and data will be made available with the extended version of this work.

\subsection{Ising Model}
\label{sec:ising}

In Fig.~\ref{fig:ising}, we plot the statistics of the $20 \times 20$ two-dimensional Ising model with periodic boundary conditions with 5000 samples at each condition from FrOGS with unreparametrized rates of Eq.~\eqref{eq:original} trained in $[-0.05, 0.05]$ and $[1.5, 3.0]$ for $\Delta \mu$ and $T$, respectively. The free energies of the 2D Ising model are known exactly at finite size~\citep{Kaufman1949}, allowing us to compare our model to true values. The thermodynamic results from FrOGS samples match the Kaufman ground truth with high accuracy (Fig.~\ref{fig:ising}). Sampling with both the prior and the learned transport (orange in the figure) tracks the references much closer than the output of the autoregressive part with CTMC transport turned off during evaluation ($Q^\theta_t=0$, blue), showcasing the ability of the learned transport. 
\begin{figure}
    \centering
    \includegraphics[width=\linewidth]{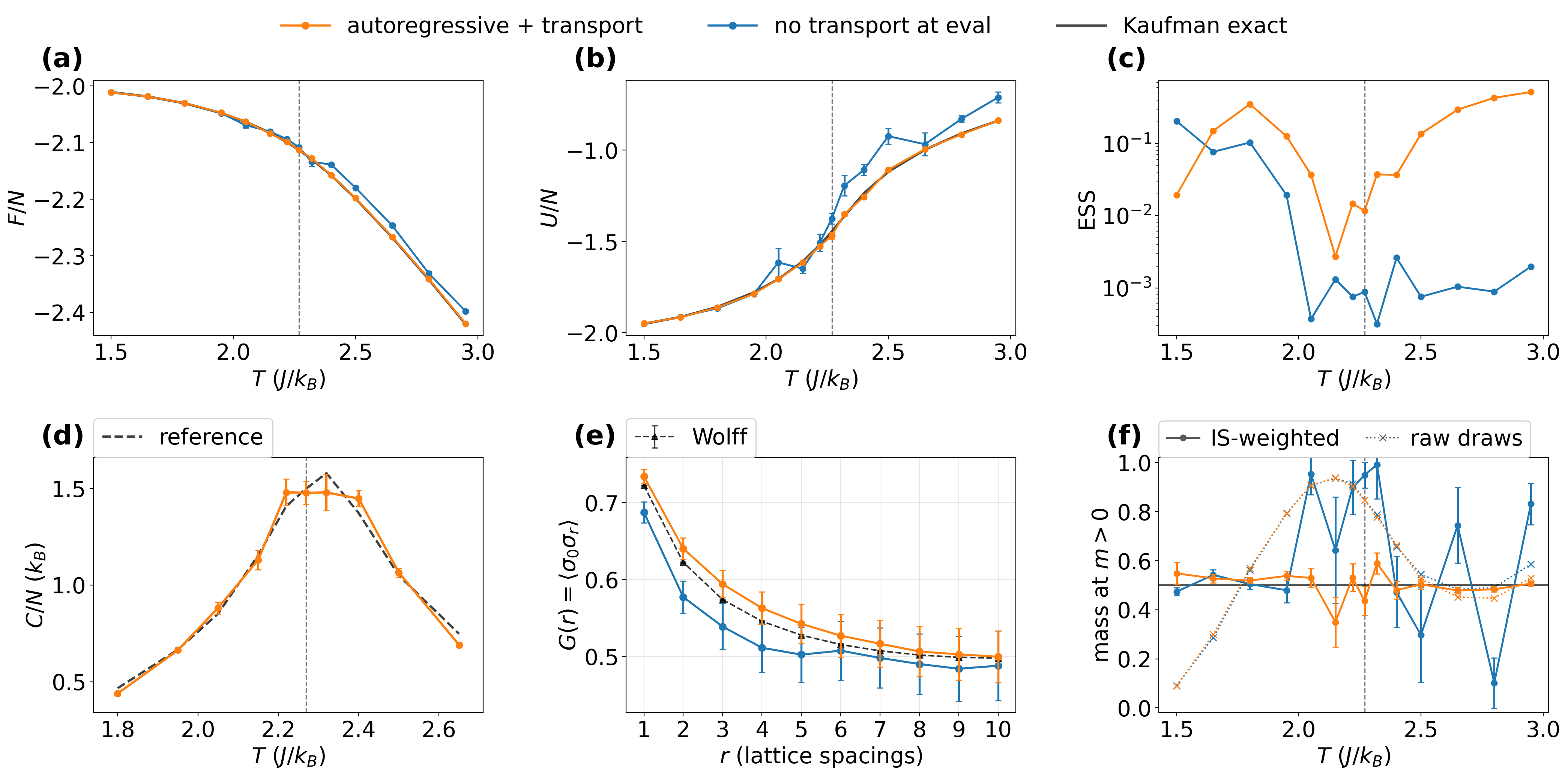}
    \caption{Ising model statistics at $\Delta\mu=0$ (i.e. $h=0$), with error bars via bootstrapping. (a) free energy per site ($F/N$) vs. $T$, compared to Kaufman exact values. (b) Energy per site ($U/N$) vs. $T$, compared to Kaufman exact values. (c) log plot of ESS vs. $T$. (d) Heat capacity per site ($C/N$) vs. $T$. The reference is the numerical derivative of the Kaufman exact energy at this temperature spacing. (e) Correlation values vs. lattice spacings, compared to the standard Wolff sampling algorithm~\citep{wolff1989collective}. (f) Weighted and unweighted (raw) sample mass at one of the two states, across different temperatures. The black line is 0.5 as the reference.}
    \label{fig:ising}
\end{figure}

In the appendix (\S\ref{sec:multiL}), we present the $20 \times 20$ system result alongside $10 \times 10$ and $15 \times 15$, following \citet{du2026scaling}, and show FrOGS matches the Kaufman/Wolff reference at every scale.

\subsection{AgPd}

AgPd, under our CE, has a single miscibility gap up to 600K (Fig.~\ref{fig:agpd}), meaning the solid solution separates into Ag-rich and Pd-rich phases up to that temperature. This is a different type of challenge from CuAu, where the phase boundary is between ordered and disordered phases. 

We reproduce the miscibility gap of a 125-site ($5\times5\times5$) AgPd model with high accuracy. We sweep the $(\Delta\mu, T)$ space, drawing 5000 samples at each point, with 0.01eV and 25K resolution in the $[-0.4\text{eV}, 0.4\text{eV}]$ and $[200\text{K}, 900\text{K}]$ range for $\Delta\mu$ and $T$, respectively, matching \citet{damewood2022sampling}.
The sampling is efficient with ESS $\ge 0.1$ in 98.38\% of the $(\Delta \mu, T)$ grid (Fig. \ref{fig:agpd_ess}) with the median value of 0.903. The shape of the binodals agreeing with the metadynamics reference verifies the accuracy of the free energies predicted by our sampler.

\subsection{CuAu}
\label{sec:cuau}

\begin{figure}
  \centering
  \begin{subfigure}[b]{0.48\linewidth}
    \includegraphics[width=\linewidth]{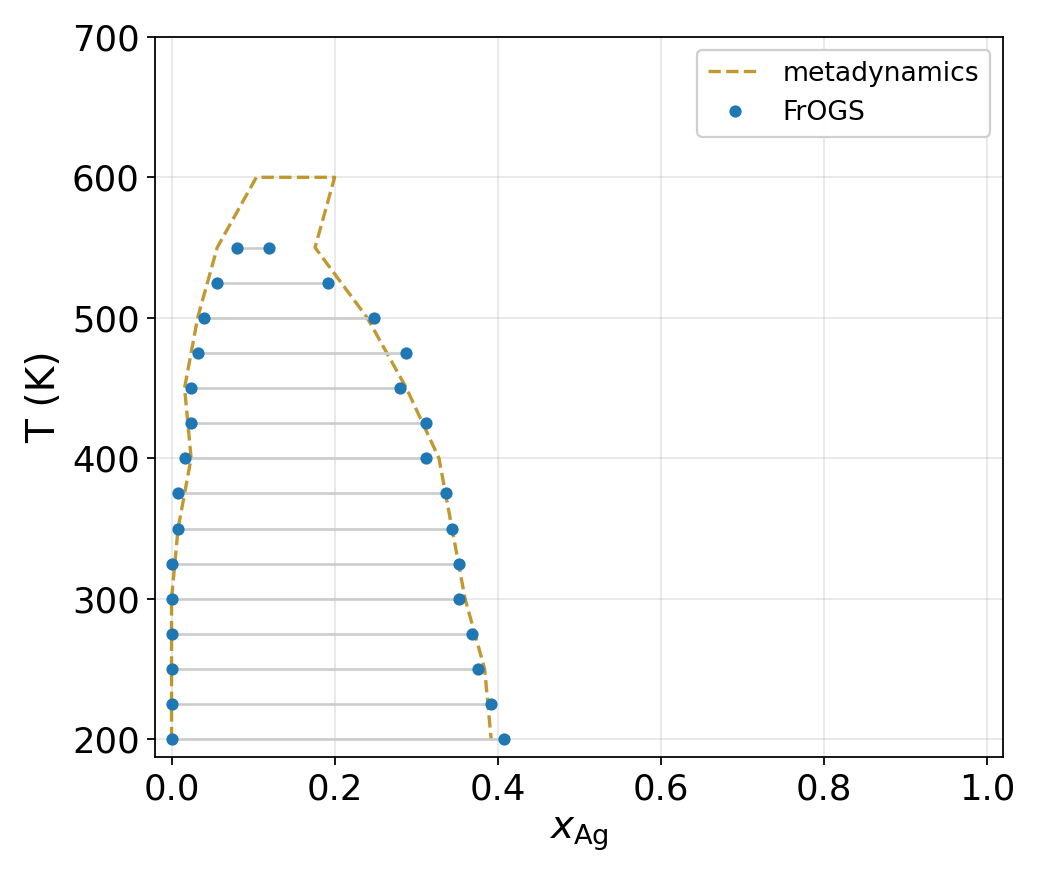}
    \caption{AgPd, FrOGS}
    \label{fig:agpd}
  \end{subfigure}
  \hfill
  \begin{subfigure}[b]{0.48\linewidth}
    \includegraphics[width=\linewidth]{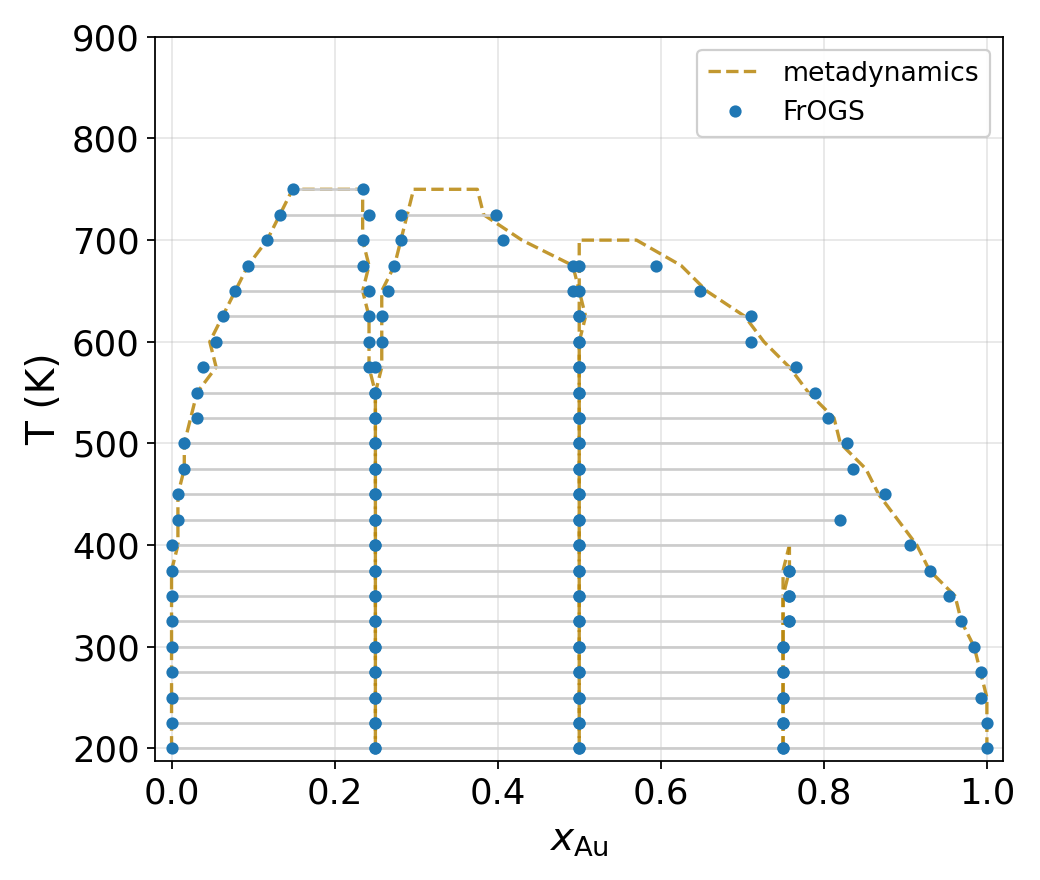}
    \caption{CuAu, FrOGS}
    \label{fig:cuau}
  \end{subfigure}

  \vspace{\baselineskip}

  \begin{subfigure}[b]{0.48\linewidth}
    \includegraphics[width=\linewidth]{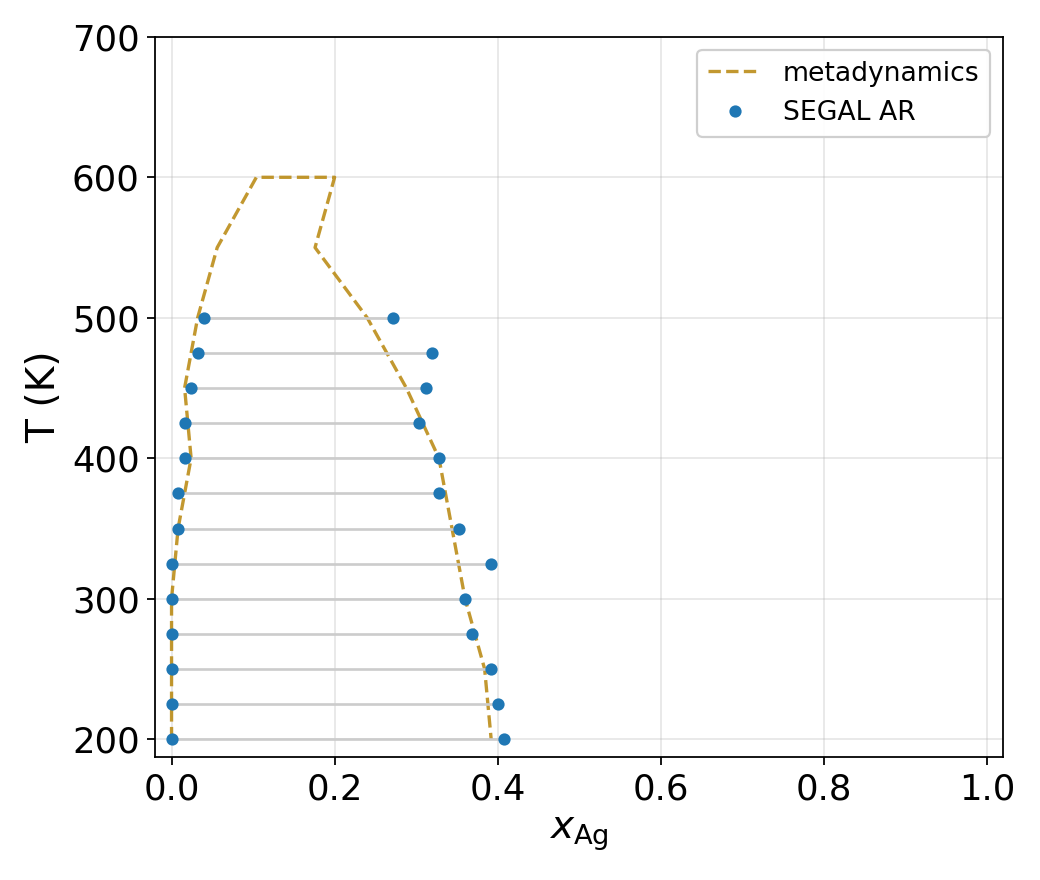}
    \caption{AgPd, SEGAL}
    \label{fig:agpd_segal}
  \end{subfigure}
  \hfill
  \begin{subfigure}[b]{0.48\linewidth}
    \includegraphics[width=\linewidth]{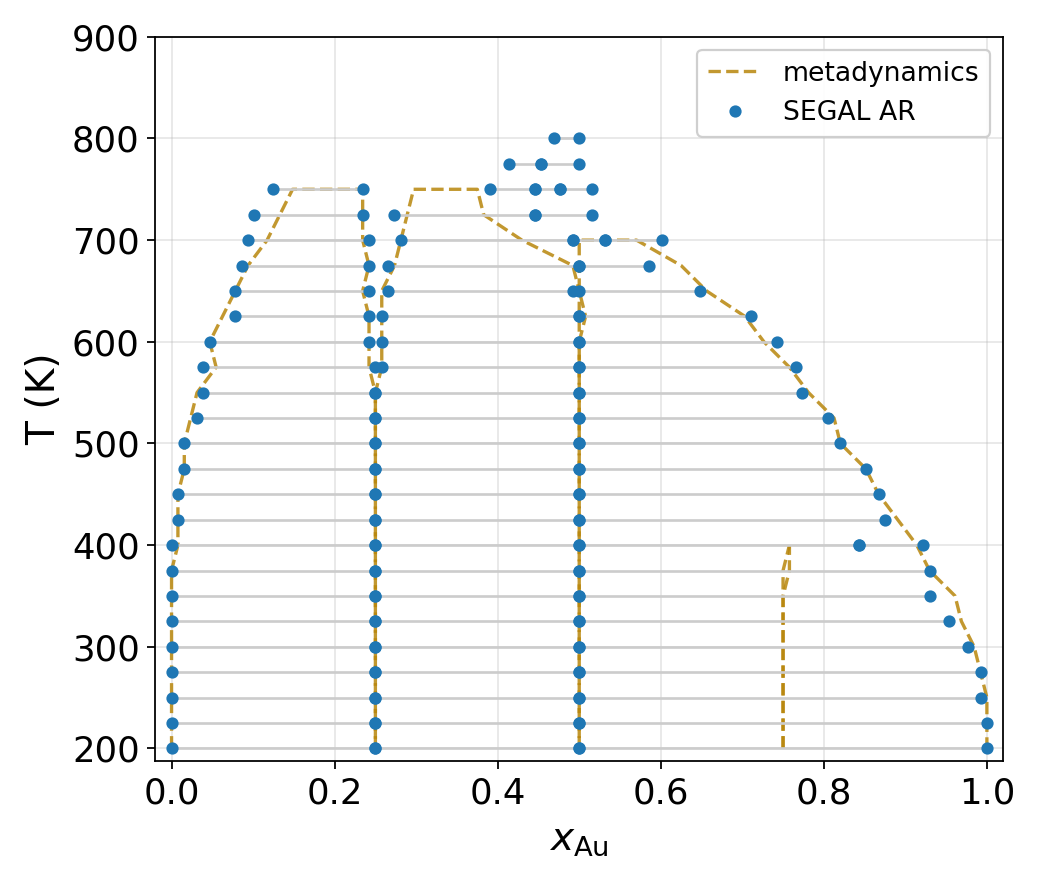}
    \caption{CuAu, SEGAL}
    \label{fig:cuau_segal}
  \end{subfigure}
  \caption{Phase diagrams of 125-site AgPd and 128-site CuAu. Top row: constructed with FrOGS. Bottom row: constructed with SEGAL, the autoregressive model of \citet{damewood2022sampling}. All four plots share the same phase diagram constructor (\S\ref{sec:phasediagram}). The metadynamics reference data is from \citet{damewood2022sampling}.}
  \label{fig:phase_diagrams}
\end{figure}

We present the phase diagram of a 128-site ($4\times4\times8$) CuAu model drawn with samples generated by the FrOGS model. We sweep the $(\Delta\mu, T)$ space, drawing 5000 samples at each point, with 0.01eV and 25K resolution in the $[-0.2\text{eV}, 0.15\text{eV}]$ and $[200\text{K}, 1200\text{K}]$ range for $\Delta\mu$ and $T$, respectively, matching \citet{damewood2022sampling} as we did for AgPd. The model reliably detects all three ordered phases $\text{Cu}_3\text{Au}$, $\text{CuAu}$, and CuAu$_3$, and the phase boundary shapes agree with the reference (Fig.~\ref{fig:cuau}).

\begin{table}[t]
  \caption{$\chi^2/\mathrm{dof}$ test against the flat $1/g$ expectation
    over the degenerate ordering variants of each phase. Samples pass the gate if their composition is within 2 sites of the
phase's ideal stoichiometry and their long-range order parameter is at
least half its value in the perfectly ordered configuration. A sweep point counts only if at least 50 samples pass. ``unweighted'' is the statistic on the bare sample counts; ``weighted'' applies the importance weights and replaces the sample size with the effective size $N_{\mathrm{eff}}=M\cdot\mathrm{ESS}\in(1,M]$.  We report the median
    over every gated point of the $(\Delta\mu,T)$ sweep (361/190/32 points, respectively) in the table.}
  \vspace{\baselineskip}
  \centering
  \begin{tabular}{lccc}
    \toprule
    $\chi^2/\mathrm{dof}$ vs $1/g$ & Cu$_3$Au ($g=4$) & CuAu ($g=6$) & CuAu$_3$ ($g=4$) \\
    \midrule
    FrOGS unweighted              & 6.9 & 24.7 & 2.4 \\
    FrOGS weighted                & 0.80 & 0.92 & 0.85 \\
    \midrule
    perfect balance (null median) & 0.79 & 0.87 & 0.79 \\
    \bottomrule
  \end{tabular}
  \label{tab:chi2}
\end{table}

FrOGS distributes samples over all of the degenerate modes of each ordered phase. There is no mode collapse. A $\chi^2/\mathrm{dof}$ test against the
flat $1/g$ expectation shows that the raw samples deviate from uniform, while the importance-weighted counts are
close to uniform (Table~\ref{tab:chi2}), demonstrating samples are spread correctly across configurations. The importance weights are therefore correcting an imbalance
among configurations of the same energy, using information about the full configuration. A near-perfectly trained model will yield
similar values to the last row for the raw counts as well, so the gap between the top two rows shows how much the estimator is supplying here. 

The ESS figure (Fig.~\ref{fig:cuau_ess}) elucidates the accuracy of the phase diagram drawn with FrOGS. High ESS means the variance of the log-weights ($A$) is low, so the samples contribute more evenly to the estimation of the partition function~\eqref{eq:partition_t}. With $\text{ESS} \geq 0.1$ in 98.58\% of the $(\Delta \mu, T)$ grid with the median value of 0.907, the phase diagram constructor can reliably estimate the free energy.

\begin{figure}
  \centering
  \begin{subfigure}[b]{0.48\linewidth}
    \includegraphics[width=\linewidth]{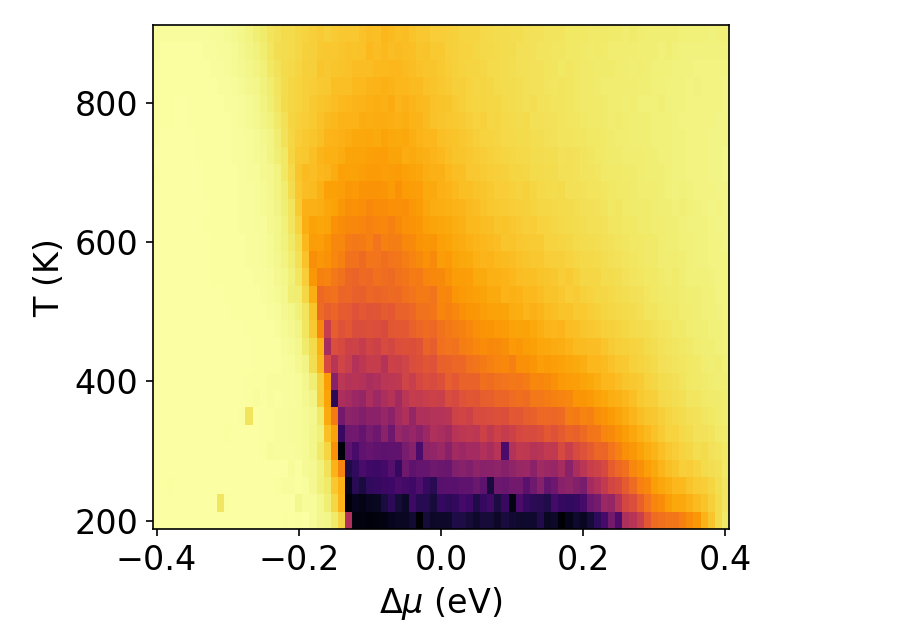}
    \caption{AgPd}
    \label{fig:agpd_ess}
  \end{subfigure}
  \hfill
  \begin{subfigure}[b]{0.48\linewidth}
    \includegraphics[width=\linewidth]{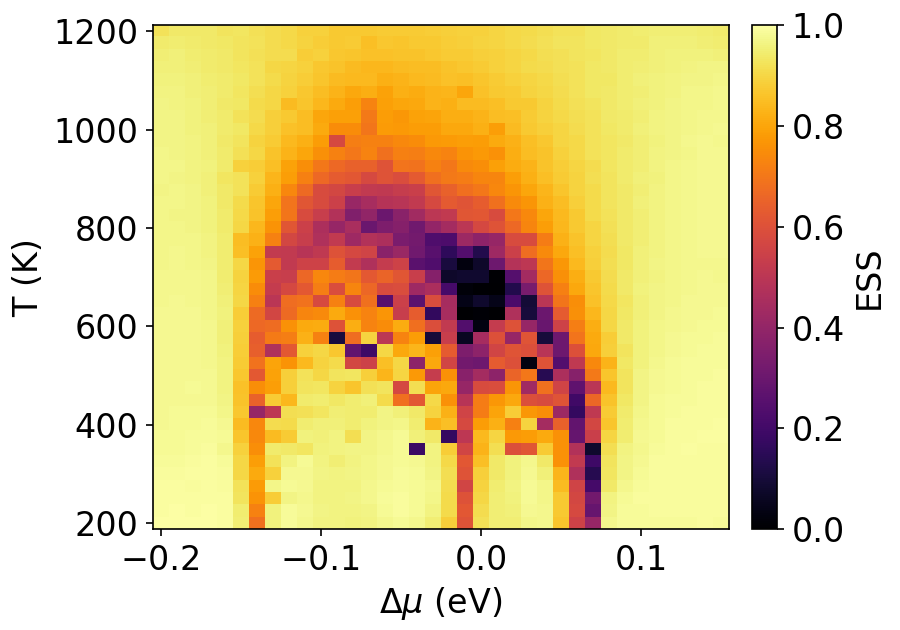}
    \caption{CuAu}
    \label{fig:cuau_ess}
  \end{subfigure}
  \caption{ESS plots generated with FrOGS for 125-site AgPd and 128-site CuAu.}
  \label{fig:ess}
\end{figure}

\subsection{Comparison with previous work}
\label{sec:comp}

We benchmark our work against two previous samplers, \citet{damewood2022sampling} and \citet{du2026scaling}. We rerun our phase diagram constructor (\S\ref{sec:phasediagram}) with the released checkpoint by \citet{damewood2022sampling}. \citet{du2026scaling} does not have public code as of August 2026, so we compare our phase diagram to the figures they report in the paper.

First, we observe that only FrOGS reliably captures the stability range of the CuAu$_3$ phase. We observe this by comparing Fig.~\ref{fig:cuau} against Fig.~\ref{fig:cuau_segal} and \citet[Fig.~4g]{du2026scaling}. We also do a sensitivity analysis in \S\ref{sec:sensitivity} to show that the comparison of Fig.~\ref{fig:cuau} to Fig.~\ref{fig:cuau_segal} holds across different phase diagram construction thresholds and different initialization seeds for FrOGS. Since the formation energy of $\text{CuAu}_3$ is only slightly favored compared to the $\text{CuAu}+\text{Au}$ mix~\citep[Fig.~3a]{damewood2022sampling}, the phase is thermodynamically unstable above 400K, according to metadynamics. The phase diagram drawn with the FrOGS samples agreeing with the metadynamics result for $\text{CuAu}_3$ demonstrates the accuracy of our framework. 

Furthermore, FrOGS tracks the shape of the phase boundary curves more closely to the metadynamics reference than \citet{damewood2022sampling} (Fig.~\ref{fig:phase_diagrams}), and FrOGS better agrees with the reference on the heat capacity of the Ising model than \citet[Fig.~3d]{du2026scaling}. Higher ESS values across different chemical conditions (Fig.~\ref{fig:cuau_ess}) lead FrOGS to predict the phase boundary curves more accurately.

\section{Conclusion}
\label{sec:conclusion}

To our knowledge, FrOGS is the first discrete neural sampler conditioned on $(\Delta\mu, T)$ that combines an autoregressive network with a learned CTMC transport. It provides i.i.d.\ configurations and unbiased estimates of the partition function, placing free energies at every condition on a common absolute scale without an external reference. To confirm the validity of FrOGS, we first compare its prediction of the thermodynamic variables of the 2D Ising model to established references. We then benchmark FrOGS on generating phase diagrams of two well-established chemical systems, AgPd and CuAu, and show that it more faithfully reconstructs these phase diagrams than previous autoregressive approaches. The next natural step would be to apply our framework to challenges in materials science such as the simulation of multicomponent ($K\geq3$) alloys.

Conventional MCMC requires thermodynamic integration that gives each phase its own chain, started from an ordering identified beforehand, since the phase-boundary crossing time grows exponentially in the interfacial area $L^{d-1}$~\citep{berg1991multicanonical}. The microstates must be reached without crossing a boundary~\citep{vandeWalle2002mc, frenkel1984new}, which is hard to guarantee as $K$ grows. FrOGS instead generates i.i.d.\ samples from the CE alone, and the free energy is directly computable at any chemical condition.

FrOGS starts from a random prior with a randomly initialized CNN and is optimized from there. A cold start suffices at this scale, which is encouraging; for larger cells, a physically motivated prior (e.g. a mean-field construction) is a natural next step. FrOGS spends $4.5$ GPU-hours to train and $31$ GPU-hours to sweep the CuAu grid at $75$ s per condition for $M=5000$ samples. This compares to a few core-hours for an anchored Metropolis calculation on the same system (\S\ref{sec:compute}), but this is a cost with an ordering per phase
supplied beforehand. Moreover, the training cost of FrOGS is paid once, and the sweep cost is paid only at the conditions requested, which in multicomponent alloy design is a sparse set of points in a
high-dimensional simplex. In that regime Wang--Landau requires a $K$-dimensional density of states
converged everywhere, and metadynamics must fill a $(K-1)$-dimensional collective variable space, whose cost grows exponentially with $K$~\citep{Bussi2020Metadynamics}. Finally, because we can compute the gradients of the free energy with the same samples via standard thermodynamic identities, FrOGS opens the way to
gradient-based inverse design.

\begin{ack}
We thank Peter Holderrieth, Jason Ogbebor, Daniel Xiao, and Ameya Daigavane for helpful discussions regarding our methodology.

K.M. and T.E.S. were supported by the Air Force Office of Scientific Research under Award No. FA9550-24-1-0067. K.M. would like to acknowledge support from the Kwanjeong Educational Foundation. E.H. was supported by the U.S. Department of Energy, Office of Science, Office of Advanced Scientific Computing Research, Department of Energy Computational Science Graduate Fellowship under Award Number DE-SC0024386. This research used resources of the MIT Office of Research Computing and Data.

This report was prepared as an account of work sponsored by an agency of the United States Government. Neither the United States Government nor any agency thereof, nor any of their employees, makes any warranty, express or implied, or assumes any legal liability or responsibility for the accuracy, completeness, or usefulness of any information, apparatus, product, or process disclosed, or represents that its use would not infringe privately owned rights. Reference herein to any specific commercial product, process, or service by trade name, trademark, manufacturer, or otherwise does not necessarily constitute or imply its endorsement, recommendation, or favoring by the United States Government or any agency thereof. The views and opinions of authors expressed herein do not necessarily state or reflect those of the United States Government or any agency thereof.
\end{ack}

\newpage
\bibliographystyle{abbrvnat}
\bibliography{references}

\newpage
\appendix

\section{Technical appendices and supplementary material}

\begingroup
\parindent=0pt \parskip=2pt
\S\ref{sec:sgc}\quad Sampling from the SGC ensemble\dotfill\pageref{sec:sgc}\par
\S\ref{sec:leapsth}\quad LEAPS\dotfill\pageref{sec:leapsth}\par
\S\ref{sec:derivation}\quad Deriving the reparametrized corrector\dotfill\pageref{sec:derivation}\par
\S\ref{sec:phasediagram}\quad Phase diagram construction\dotfill\pageref{sec:phasediagram}\par
\S\ref{sec:gradient}\quad Joint optimization of the autoregressive network\dotfill\pageref{sec:gradient}\par
\S\ref{sec:vargrad}\quad The prior objective as the VarGrad generalized\dotfill\pageref{sec:vargrad}\par
\S\ref{sec:proof}\quad Proof of the local equivariance of the U-net pyramid\dotfill\pageref{sec:proof}\par
\S\ref{sec:film}\quad The FiLM setup\dotfill\pageref{sec:film}\par
\S\ref{app:reindex}\quad Re-indexing the fcc lattice onto an integer grid\dotfill\pageref{app:reindex}\par
\S\ref{sec:ablation}\quad Ablation study of prior-only and transport-only networks\dotfill\pageref{sec:ablation}\par
\S\ref{sec:training}\quad Model and training details\dotfill\pageref{sec:training}\par
\S\ref{sec:multiL}\quad Multiscale 2D Ising result comparison\dotfill\pageref{sec:multiL}\par
\S\ref{sec:sensitivity}\quad Sensitivity analysis of FrOGS phase count\dotfill\pageref{sec:sensitivity}\par
\S\ref{sec:compute}\quad FrOGS vs.\ conventional Monte Carlo\dotfill\pageref{sec:compute}\par
\endgroup

\subsection{Sampling from the SGC ensemble} 
\label{sec:sgc}
Our goal is to sample from the semi-grand canonical ensemble. Consider a fixed lattice of $N$ sites indexed by $i \in \{1,\dots,N\}$. For a $K$-component alloy, each site carries an occupation variable $\sigma_i \in \{1,\dots,K\}$ labeling the chemical species on it. A configuration is $\sigma = (\sigma_1,\dots,\sigma_N)$ drawn from the configuration space $\Omega = \{1,\dots,K\}^N$ with $|\Omega| = K^N$. The species counts and concentrations are
\begin{equation}
  N_\alpha(\sigma) = \sum_{i=1}^{N} \delta_{\sigma_i,\alpha},
  \qquad
  x_\alpha(\sigma) = \frac{N_\alpha(\sigma)}{N},
  \qquad
  \sum_{\alpha=1}^{K} N_\alpha(\sigma) = N ,
\end{equation}
where $\delta$ is the Kronecker delta. The configurational energy is supplied by
a cluster expansion fit to first-principles data,
\begin{equation}
  E(\sigma) = \sum_{\omega} J_\omega\,\Phi_\omega(\sigma),
  \label{eq:ce}
\end{equation}
with effective cluster interactions $J_\omega$ and cluster correlation functions
$\Phi_\omega$. For our purposes, $E : \Omega \to \mathbb{R}$ is simply a
scalar function of the configuration that is cheap to evaluate.

In the SGC ensemble, the number of sites $N$ and the temperature $T$ are held fixed while the composition fluctuates through exchange with a reservoir set by chemical potentials $\mu_\alpha$. Because $\sum_\alpha N_\alpha = N$ is constrained, only the chemical potential differences are thermodynamically meaningful. Taking species $K$ as the reference, we define
$\Delta\mu_\alpha = \mu_\alpha - \mu_K$ for $\alpha = 1,\dots,K-1$ and collect
them as $\Delta\mu = (\Delta\mu_1,\dots,\Delta\mu_{K-1})$. A thermodynamic
condition is then the pair $c = (\Delta\mu, T)$, and with
$\beta = 1/(k_{\mathrm{B}} T)$ the semi-grand canonical partition function is
\begin{equation}
  Z_{\mathrm{SG}}(c)
  = \sum_{\sigma\in\Omega}
    \exp\!\Big[
      -\beta\Big(
        E(\sigma)
        - \sum_{\alpha=1}^{K-1} \Delta\mu_\alpha\, N_\alpha(\sigma)
      \Big)
    \Big],
  \label{eq:zsg}
\end{equation}
where an additive constant $\beta\mu_K N$ common to every configuration has been
absorbed into the free-energy reference. Introducing the effective Hamiltonian
\begin{equation}
  \mathcal{H}_c(\sigma)
  = E(\sigma)
    - \sum_{\alpha=1}^{K-1} \Delta\mu_\alpha\, N_\alpha(\sigma),
    \label{eq:hamiltonian}
\end{equation}
the equilibrium distribution we wish to sample is the Boltzmann distribution
\begin{equation}
  p_c(\sigma)
  = \frac{1}{Z_{\mathrm{SG}}(c)}\,
    e^{-\beta \mathcal{H}_c(\sigma)},
  \qquad
  Z_{\mathrm{SG}}(c) = \sum_{\sigma\in\Omega}
    e^{-\beta \mathcal{H}_c(\sigma)}.
  \label{eq:target}
\end{equation}
For a binary alloy ($K=2$; we use $\sigma_i \in \{-1,1\}$ instead of $\{1, 2\}$), this reduces to $\mathcal{H}_c(\sigma) = E(\sigma) -
\Delta\mu \sum_i \frac{\sigma_i+1}{2}$, with $\Delta\mu$ as a scalar.

The thermodynamic potential, which we refer to as the free energy, is
\begin{equation}
  F_{\mathrm{SG}}(c) = -k_{\mathrm{B}} T \, \ln Z_{\mathrm{SG}}(c).
  \label{eq:freeenergy}
\end{equation}
A general observable follows from
$\langle O \rangle_c = \sum_{\sigma} O(\sigma)\, p_c(\sigma)$. 

The usual challenge is that $Z_{\mathrm{SG}}(c)$ is a sum over $K^N$ configurations and is intractable to evaluate directly. A sampler that both draws from $p_c$ and yields an unbiased estimate of $Z_{\mathrm{SG}}(c)$ across a range of conditions from a single trained model would be beneficial, which is what we aim to provide.

\subsection{LEAPS}
\label{sec:leapsth}

The following is a brief summary of the theory of LEAPS \citep{holderrieth2025leaps} concerning our research.

The rate matrix $Q^\theta_t$ (\S\ref{sec:LEAPS}) transports $\rho_0$ exactly along the path iff it solves
the Kolmogorov forward equation (KFE):
\begin{equation}
\partial_t\rho_t(\sigma)=\sum_{i,\,\tau\neq\sigma_i}\!\Big(
        Q^\theta_t(\sigma_i,i\mid\sigma')\,
        \rho_t(\sigma')
        -Q^\theta_t(\tau,i\mid\sigma)\rho_t(\sigma)
      \Big),
    \quad \sigma'=\mathrm{Swap}(\sigma,i,\tau),
\end{equation}
which an arbitrary network will not. LEAPS records the mismatch
within the operator $\mathcal{K}^\theta_t$:
\begin{equation}
  \mathcal{K}^\theta_t \rho_t(\sigma)
    = -\,\partial_t U_t(\sigma)
    - \sum_{i,\,\tau\neq\sigma_i}\!\Big(
        Q^\theta_t(\sigma_i,i\mid\sigma')\,
        \tfrac{\rho_t(\sigma')}{\rho_t(\sigma)}
        -Q^\theta_t(\tau,i\mid\sigma)
      \Big),
    \quad \sigma'=\mathrm{Swap}(\sigma,i,\tau).
  \label{eq:corrector}
\end{equation}
$\mathcal{K}^\theta_t$ is the KFE mismatch minus the change in free energy, since
$(\text{LHS}-\text{RHS of KFE})/\rho_t
= \partial_t \log \rho_t - Q_t^\theta \rho_t/\rho_t
= \partial_t F_t - \partial_t U_t - Q_t^\theta \rho_t/\rho_t
= \partial_t F_t + \mathcal{K}^\theta_t \rho_t$, with $Q_t^\theta \rho_t$ shorthand
for the right-hand side of the KFE.
 
Each sample integrates this residual along its own trajectory into a log-weight
\begin{equation}
  A_t=\int_0^t \mathcal{K}^\theta_s\rho_s(X_s)\,ds,
  \qquad A_0=0,
  \label{eq:weight}
\end{equation}
analogous to the accumulated work in the Jarzynski equality, given
$\mathcal{K}^\theta_t \rho_t(\sigma) = -\partial_tF_t$ at perfect transport.
Crucially, the discrete-state Jarzynski equality holds for any
$Q^\theta_t$~\citep[Thm.~5.2]{holderrieth2025leaps},
\begin{equation}
    \mathbb{E}[e^{A_t}]=Z_t/Z_0 ,
    \label{eq:partition}
\end{equation}
and any observable is recovered by reweighting,
$\langle O\rangle_{\rho_t}=\mathbb{E}[e^{A_t}O(X_t)]/\mathbb{E}[e^{A_t}]$. This
gives us the two properties FrOGS is built on: a direct estimate of
$Z_{\mathrm{SG}}=Z_1$ and consistent observables.
 
Since $\mathrm{Var}[A_t]=0$ exactly when $X_t\sim\rho_t$, the transport is trained
by minimizing the weight variance. LEAPS minimizes the residual of the KFE with a
physics-informed neural network (PINN),
\begin{equation}
    \mathcal{L}(\theta,\phi;t)
    = \mathbb{E}_{s\sim \mathrm{Unif}_{[0,t]},\, \sigma_s\sim\mathcal{R}_s}
       \Big[\big|\mathcal{K}_s^{\theta}\rho_s(\sigma_s) + \partial_s F_s^{\phi}\big|^2\Big],
    \qquad
    \mathcal{L}^{\text{log-var}}(\theta;t)\le t^2\,\mathcal{L}(\theta,\phi;t),
    \label{eq:pinn}
\end{equation}
where $\mathcal{L}^{\text{log-var}}(\theta;t)= \text{Var}[A_t]$. $(\theta^*,\phi^*)$, the unique minimizer of $\mathcal{L}$, has $Q_t^{\theta^*}$ satisfying the KFE
and $F_t^{\phi^*}=F_t$~\citep[Prop.~6.1]{holderrieth2025leaps}. Here
$F^\phi_t:\mathbb{R}\to\mathbb{R}$ ($\mathbb{R}^3\to\mathbb{R}$ in our case, with $\Delta\mu$ and $T$ additionally as inputs) is a free-energy network, a scalar function of
time that tracks the $\partial_t F_t$ part of $-\mathcal{K}_t^\theta$; it is
a training-time device and enters neither $A_t$ nor the free-energy readout. The
reference measure $\mathcal{R}$ is arbitrary provided its support covers the path,
so states may be drawn from a replay buffer rather than fresh simulations.

The remaining obstacle is computational. Evaluating Eq.~\eqref{eq:corrector} requires
the reverse rates at all $N(K-1)$ single-site neighbors of $\sigma$. LEAPS solves this issue with a locally equivariant parametrization of $Q^\theta_t$ (\S\ref{sec:arch})
that yields the whole sum in a single forward pass.

\subsection{Deriving the reparametrized corrector}
\label{sec:derivation}

The derivation in this section is  at a fixed condition $c$. For a move
$\sigma'=\mathrm{Swap}(\sigma,i,\tau)$, let
$\Delta=U_t(\sigma')-U_t(\sigma)$,
$g=G_t^\theta(\tau,i\mid\sigma)$, and
$M=M_t(\tau,i\mid\sigma)=e^{-[\Delta]_+}$.
The reverse move has energy difference $-\Delta$ and Metropolis factor
$M'=e^{-[-\Delta]_+}$. Hence
\begin{equation}
  M'e^{-\Delta}=M,
  \label{eq:db}
\end{equation}
by $[-\Delta]_++\Delta=[\Delta]_+$.

\begin{proposition}
\label{prop:reparam}
For the rates $\tilde Q_t^\theta=[G_t^\theta]_+M_t$ of ~\eqref{eq:reparam}, the corrector~\eqref{eq:corrector} becomes
\begin{equation}
  \tilde{\mathcal{K}}_t^\theta\rho_t(\sigma)
  =-\partial_tU_t(\sigma)
   +\sum_{i,\,\tau\neq\sigma_i}
      M_t(\tau,i\mid\sigma)G_t^\theta(\tau,i\mid\sigma),
\end{equation}
with $|M_tG_t^\theta|\le |G_t^\theta|$.
\end{proposition}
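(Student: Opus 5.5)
The plan is to substitute $\tilde Q_t^\theta$ into the corrector \eqref{eq:corrector} and simplify one move at a time. Fix $\sigma$ and a move $(i,\tau)$ with $\sigma'=\mathrm{Swap}(\sigma,i,\tau)$, and use the notation $\Delta,g,M,M'$ introduced before the proposition. Since $\rho_t\propto e^{-U_t}$, the density ratio is $\rho_t(\sigma')/\rho_t(\sigma)=e^{-\Delta}$; the normalizer $Z_t$ cancels, so it never enters. The reverse rate is taken at $\sigma'$ with target token $\sigma_i$. Its Metropolis factor is $M_t(\sigma_i,i\mid\sigma')=e^{-[-\Delta]_+}=M'$, because the reverse move changes the energy by $-\Delta$. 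By local equivariance \eqref{eq:le}, which holds for our architecture by \S\ref{sec:proof}, we have $G_t^\theta(\sigma_i,i\mid\sigma')=-g$. Hence the reverse rate is $\tilde Q_t^\theta(\sigma_i,i\mid\sigma')=[-g]_+M'$.

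The summand of \eqref{eq:corrector} for this move is therefore
\begin{equation*}
  [-g]_+\,M'\,e^{-\Delta}-[g]_+\,M .
\end{equation*}
By the detailed-balance identity \eqref{eq:db}, $M'e^{-\Delta}=M$, so the first term equals $[-g]_+M$. The summand then becomes $M\big([-g]_+-[g]_+\big)=-Mg$, using $[a]_+-[-a]_+=a$. Summing over all $i$ and $\tau\neq\sigma_i$, and recalling the overall minus sign in front of the sum in \eqref{eq:corrector}, gives $-\partial_tU_t(\sigma)+\sum_{i,\tau}M_tG_t^\theta$. This is the claimed expression. Note that $e^{-\Delta}$ has disappeared: the only exponentials left are $M_t=e^{-[\Delta]_+}\le1$.

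The bound is immediate. Since $[\Delta]_+\ge0$, we have $0<M_t\le1$, and therefore $|M_tG_t^\theta|\le|G_t^\theta|$.

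The only delicate point is the bookkeeping of the reverse rate. One must check three things: the reverse Metropolis factor is evaluated with the sign-flipped energy difference; local equivariance is applied at $\sigma'$ rather than $\sigma$; and the sign convention of the sum in \eqref{eq:corrector} is tracked correctly. Once the reverse summand $[-g]_+M'$ is in hand, the rest is the two elementary identities.
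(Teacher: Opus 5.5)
Your proposal is correct and follows the paper's proof essentially line by line: you get the reverse rate $[-g]_+M'$ from local equivariance, reduce each summand to $-Mg$ using $M'e^{-\Delta}=M$ and $[a]_+-[-a]_+=a$, track the minus sign in front of the sum, and obtain the bound from $0<M_t\le 1$. Your extra bookkeeping remarks (that $Z_t$ cancels, and that the reverse Metropolis factor is evaluated at $-\Delta$) are accurate and simply make explicit steps the paper leaves implicit.
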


\begin{proof}
Local equivariance~\eqref{eq:le} gives the reverse rate
$\tilde Q_t^\theta(\sigma_i,i\mid\sigma')=[-g]_+M'$.
Since $\rho_t(\sigma')/\rho_t(\sigma)=e^{-\Delta}$, each term in the
sum of Eq.~\eqref{eq:corrector} reduces to
\begin{equation}
  [-g]_+M'e^{-\Delta}-[g]_+M
  =M\big([-g]_+-[g]_+\big)=-Mg.
\end{equation}
The sum enters the corrector with a minus sign, yielding the stated formula.
The bound follows from $0<M_t\le1$.
\end{proof}

The corrector therefore requires one evaluation of the flux network and no
standalone density ratio $e^{-\Delta}$. Its Metropolis factor has a
non-positive exponent, avoiding the overflow that can occur for strongly
downhill moves. For the transport parameters $\theta$, $M_t$ is fixed, so
$|\partial_\theta(M_tG_t^\theta)|
\le |\partial_\theta G_t^\theta|$ as well.

\paragraph{Unbiasedness.}
The reparametrized off-diagonal rates are nonnegative. Using these rates with
the matching corrector in Proposition~\ref{prop:reparam} preserves the
continuous-time LEAPS identity
$\mathbb{E}[e^{A_t}]=Z_t/Z_0$~\citep[Thm.~5.2]{holderrieth2025leaps}.

\paragraph{Expressiveness.}
Call a rate matrix \emph{one-way} if at most one direction of each
configuration pair has a nonzero rate. Both $[G]_+$ and $[G]_+M_t$ have this
property when $G$ is antisymmetric.

\begin{proposition}
\label{lem:express}
For every one-way rate matrix $Q^\star$, there is an antisymmetric flux $G$
such that $[G]_+M_t=Q^\star$. Thus, at the level of unrestricted
antisymmetric fluxes, the reparametrization preserves the one-way rate
family underlying the universal-representation result of
\citet[Prop.~8.1]{holderrieth2025leaps}.
\end{proposition}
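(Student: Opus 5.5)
The proof is a direct construction: divide the target rates by the Metropolis factor and antisymmetrize. Fix $t$ and a one-way rate matrix $Q^\star$, meaning that for each neighboring pair $(\sigma,\sigma')$ with $\sigma'=\mathrm{Swap}(\sigma,i,\tau)$, at most one of $Q^\star(\tau,i\mid\sigma)$ and $Q^\star(\sigma_i,i\mid\sigma')$ is nonzero. I define $G$ pair by pair:
\begin{equation*}
G(\tau,i\mid\sigma)=\frac{Q^\star(\tau,i\mid\sigma)}{M_t(\tau,i\mid\sigma)}-\frac{Q^\star(\sigma_i,i\mid\sigma')}{M_t(\sigma_i,i\mid\sigma')}.
\end{equation*}
This is well defined because $M_t=e^{-[\Delta]_+}\in(0,1]$ never vanishes.

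First I check that $G$ is antisymmetric in the sense of the local equivariance condition~\eqref{eq:le}. Applying the formula at $(\sigma_i,i\mid\sigma')$ swaps the roles of the two terms, since $\mathrm{Swap}(\sigma',i,\sigma_i)=\sigma$. Hence $G(\sigma_i,i\mid\sigma')=-G(\tau,i\mid\sigma)$.

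Next I check that $[G]_+M_t=Q^\star$, splitting into cases by the one-way property.
\begin{itemize}
\item If both directions vanish, then $G=0$ on the pair, and both sides are zero.
\item If only the forward rate $q=Q^\star(\tau,i\mid\sigma)>0$ is nonzero, then $G(\tau,i\mid\sigma)=q/M_t>0$. So $[G]_+M_t=q$. The reverse entry is $-q/M_t<0$, so $[G]_+=0$ there, which matches the zero reverse rate.
\item If only the reverse rate is nonzero, the same argument applies with the roles exchanged.
\end{itemize}
Since every off-diagonal entry is either a single-site move or identically zero (rates are restricted to single-site moves), this covers all entries. Diagonal entries are determined by the row-sum convention on both sides.

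The one-way hypothesis is essential here. If both directions were positive, $[G]_+$ would vanish on one of them, so no antisymmetric $G$ could work. The only points needing care are the nonvanishing of $M_t$ and the correct orientation of $M_t$ for the reverse move, where $M'=e^{-[-\Delta]_+}$. Neither is a real obstacle.

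Finally, the consequence clause is immediate. \citet[Prop.~8.1]{holderrieth2025leaps} shows that rates of the form $[G]_+$ with antisymmetric $G$ are universal: the one-way family they span suffices to solve the KFE along any path. By the construction above, every such $[G]_+$ is also realized as $[\tilde G]_+M_t$ with $\tilde G=G/M_t$, antisymmetrized pairwise as above. So the reparametrized family contains the same one-way family, and universality transfers unchanged.
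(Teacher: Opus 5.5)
Your construction $G(\tau,i\mid\sigma)=Q^\star(\tau,i\mid\sigma)/M_t(\tau,i\mid\sigma)-Q^\star(\sigma_i,i\mid\sigma')/M_t(\sigma_i,i\mid\sigma')$, the antisymmetry check and the case split on the one-way property match the paper's proof exactly. Your remark that no antisymmetric $G$ can give a pair with both rates positive is the paper's converse, namely that $[G]_+M_t$ is always one-way. One wording fix for the final paragraph: say that you apply the result to $Q^\star=[G]_+$, which is one-way because $G$ is antisymmetric. Literally antisymmetrizing $G/M_t$ would give the wrong magnitudes whenever $M_t$ differs between the two directions of a move.
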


\begin{proof}
Because $M_t>0$, we can define
\begin{equation}
  G(\tau,i\mid\sigma)
  =\frac{Q^\star(\tau,i\mid\sigma)}{M_t(\tau,i\mid\sigma)}
   -\frac{Q^\star(\sigma_i,i\mid\sigma')}{M_t(\sigma_i,i\mid\sigma')}.
\end{equation}
Reversing the move exchanges the terms and negates $G$. If the forward rate
is positive, the reverse rate is zero, so $[G]_+M_t=Q^\star$.
If the forward rate is zero, $G\le0$ and the same identity holds.
Conversely, $[G]_+M_t$ is one-way for every antisymmetric $G$.
\end{proof}

The Metropolis factor also changes the parametrization's inductive bias by
explicitly damping uphill moves. We observe a faster early decrease of the
PINN loss with the reparametrized rates.

\subsection{Phase diagram construction}
\label{sec:phasediagram}

We describe the phase diagram constructor which yields the entire phase diagram with the samples from a single trained FrOGS model. For this process, we only need for each sample
$m$ its log-weight $A^{(m)}$~\eqref{eq:weight} and its composition. No information about the phases is supplied manually.

The learned prior is exactly normalized ($Z_0=1$, \S\ref{sec:phasemix}), so combining
Eq.~\eqref{eq:partition_t} with Eq.~\eqref{eq:reweighting} gives
$\mathbb{E}\big[e^{A}\,g(X_1)\big]
=Z_{1}(c)\,\langle g\rangle_{p_c}$ for any function $g$ of the final
configuration. Take $g$ to be the indicator that the configuration contains
exactly $n$ atoms of species $1$. Every configuration in that class carries the
same factor $e^{\beta\Delta\mu\,n}$, so it comes outside the sum:
\begin{equation}
  \mathbb{E}_{\rho_0}\big[e^{A}\,\mathbf{1}_{\{N_1=n\}}\big]
    \;=\;Z_{1}(c)\,p_c(N_1\!=\!n)
    \;=\;e^{\beta\Delta\mu\,n}\,Z_{\mathrm{c}}(n,T),
  \qquad
  Z_{\mathrm{c}}(n,T)\equiv\!\!\!\sum_{\sigma:\,N_1(\sigma)=n}\!\!\!e^{-\beta E(\sigma)} .
  \label{eq:pd-canonical}
\end{equation}
The left-hand side is estimated without bias by the sample average
$\frac{1}{M}\sum_m e^{A^{(m)}}\mathbf{1}_{\{N_1^{(m)}=n\}}$. The right-hand side
is the partition function at a fixed composition times a factor. Dividing by $e^{\beta\Delta\mu\,n}$ therefore gives an
unbiased estimator of $Z_c$ on the integer lattice $n=0,\dots,N$, with $x=n/N$:
\begin{equation}
  \widehat{Z}_{\mathrm{c}}(n,T)
    =\frac{e^{-\beta\Delta\mu\,n}}{M}
     \sum_{m\,:\,N_1^{(m)}=n} e^{A^{(m)}},
  \qquad
  f(x,T)=-\frac{k_{\mathrm{B}}T}{N}\,\ln Z_{\mathrm{c}}(n,T).
  \label{eq:pd-fx}
\end{equation}
This is a histogram, composition being discrete by
construction. In practice, the sum is accumulated with the largest $A^{(m)}$ factored out.

Because $Z_{\mathrm{c}}(n,T)$ carries no $\Delta\mu$-dependence, every $\Delta\mu$
at a given temperature yields an estimate of the same curve $f(x,T)$, each
from an independent set of samples. Let $\hat y_i$ be the estimate of
$\ln Z_{\mathrm{c}}(n,T)$ from the $i$-th condition. Since $\widehat{Z}_{\mathrm{c}}$ \eqref{eq:pd-fx} is a sum of i.i.d.
weights, the delta method gives the
standard error of $\hat y_i$ as
\begin{equation}
  s_i = N_{\mathrm{eff},i}(n)^{-1/2},
  \qquad
  N_{\mathrm{eff},i}(n)
    = \frac{\Big(\sum_{m\,:\,N_1^{(m)}=n} e^{A^{(m)}}\Big)^{2}}
           {\sum_{m\,:\,N_1^{(m)}=n} e^{2A^{(m)}}}.
  \label{eq:pd-se}
\end{equation}
Sums are over the samples drawn at condition $i$ that land in bin $n$.
This is the effective sample count restricted to the bin.

Since condition $i$ places most of its samples near its own equilibrium
composition $x(\Delta\mu_i,T)$, $s_i$ is smallest for bins near that composition
and grows with distance from it; bins the condition never reaches contribute
nothing. Estimates that are unbiased\footnote{To be precise, $\hat y_i$ is the log of an unbiased estimator, so it carries a
Jensen bias $-1/(2N_{\mathrm{eff},i})+O(N_{\mathrm{eff},i}^{-2})$ propagating to $-k_n/(2\sum_i w_i)$ in $\hat y$. The leading order term of the bias stays below the reported
standard error if the mean $N_{\mathrm{eff},i}$ over contributing
conditions exceeds $k_n/4$.} and independent differ only in variance, so
the minimum-variance way to merge them is to weight each by its precision:
\begin{equation}
  \hat y=\frac{\sum_i w_i\,\hat y_i}{\sum_i w_i},
  \qquad
  \mathrm{Var}\big[\hat y\big]=\frac{1}{\sum_i w_i},
  \qquad
  w_i=s_i^{-2}.
  \label{eq:pd-ivw}
\end{equation}
Let $k_n$ be the number of conditions with $w_i>0$. Typically $k_n>1$, and
\begin{equation}
  \chi^2_\nu(n)=\frac{1}{k_n-1}\sum_i w_i\,\big(\hat y_i-\hat y\big)^2
  \label{eq:pd-chi2}
\end{equation}
has expectation $\approx 1$ when the $s_i$ are accurate. The $k_n$ conditions
independently estimate the same $\hat y$, and their estimates should then
scatter about it by just the amount the weights predict. This statistic thus
cross-checks the accuracy of the $s_i$. We scale every error bar at that
temperature by
$\lambda=\max\{1,\ \mathrm{median}_{n\,:\,\sum_i w_i(n)\ge W}\ \chi^2_\nu(n)\}^{1/2}$
with $W=5$~\citep{Birge1932, ParticleDataGroup2024}. The median keeps $\lambda$ insensitive to a few discrepant bins,
and the one-sided maximum means the correction can only widen an error bar.

Two phases coexist when one line is tangent to $f(x)$ at two
compositions $x_a,x_b$. Such a line shares with both phases its slope, which is
the chemical potential $\Delta\mu$, and its intercept at $x=0$, which is the
semi-grand potential per site $F_{\mathrm{SG}}/N$
(Eq.~\eqref{eq:freeenergy}). Any overall
composition between $x_a$ and $x_b$ is then realized as a non-homogeneous mixture of the two phases, which is why the interval is called a
two-phase region and $x_a,x_b$ the phase boundaries. We accept a hull edge as a tie-line only if (i)~it skips at least four
compositions, (ii)~the measured $f(x)$ lies at least three standard errors above
it somewhere in between, or no sample at any condition ever visits those
compositions (evidence of a gap), and (iii)~the same gap appears at the temperature below.

For AgPd, the hull has a single tie-line at every temperature below the critical temperature, with one miscibility
gap, giving the Ag-rich and Pd-rich phase boundaries together with
$\Delta\mu_{\mathrm{coex}}(T)$ and closing at a critical point
(Fig.~\ref{fig:agpd}). For CuAu several tie-lines coexist, and the construction returns
the ordered phases $\mathrm{Cu_3Au}$, CuAu and $\mathrm{CuAu_3}$ at their tangent
compositions (Fig.~\ref{fig:cuau}). A tie-line closes where the two phases' compositions merge, which for phases near the same stoichiometry can happen before their
atomic ordering is actually lost. The resulting temperatures are therefore not by themselves order--disorder transition temperatures, in the most precise sense.

\subsection{Joint optimization of the autoregressive network}
\label{sec:gradient}
The annealing path is the linear
  interpolation
  \begin{equation}
    U_t(\sigma\mid c) = (1-t)\,U_0(\sigma\mid c) + t\,U_1(\sigma\mid c),
    \qquad
    U_0 = -\log\rho^{\psi}_0(\sigma\mid c),
    \quad
    U_1 = \beta\,\mathcal{H}_c(\sigma),
    \label{eq:linpath}
  \end{equation}
  where $\psi$ stands for the weights of the autoregressive network. Both ingredients of the
  corrector~\eqref{eq:corrector} then carry an explicit $\psi$-dependence:
  \begin{equation}
    \partial_t U_t(\sigma\mid c)
    = \beta\,\mathcal{H}_c(\sigma) + \log\rho^{\psi}_0(\sigma\mid c),
    \label{eq:dtUt}
  \end{equation}
  constant in $t$ at fixed $\sigma$, and every single-move difference entering the
  Metropolis factor $M_t$ is
\begin{equation}
  U_t(\sigma')-U_t(\sigma)
   = t\,\beta\big(\mathcal H_c(\sigma')-\mathcal H_c(\sigma)\big)
   - (1-t)\log\frac{\rho^{\psi}_0(\sigma'\mid c)}{\rho^{\psi}_0(\sigma\mid c)}
\end{equation}
  with $\sigma'=\mathrm{Swap}(\sigma,i,\tau)$. A single backward pass through
  Eq.~\eqref{eq:pinn} therefore returns gradients for $(\theta,\phi,\psi)$ at once.

\subsection{The prior objective as the VarGrad generalized}
\label{sec:vargrad}

Section~\ref{sec:related} contrasts FrOGS with the autoregressive samplers that
are trained on a reverse KL divergence, and \S\ref{sec:coupling} asserts that our
prior is not mode-seeking because it is not trained on that objective. This
appendix makes the assertion precise. We show that the gradient FrOGS applies to its autoregressive prior is equivalent to VarGrad~\citep{richter2020vargrad} when the transport is switched off, and that the mechanism of the
log-variance loss escaping mode-seeking behavior survives even if the transport is on.
Throughout we write the unnormalized target as
$\tilde p_c(\sigma) = e^{-\beta\mathcal{H}_c(\sigma)}$, so that
$p_c = \tilde p_c / Z(c)$, and $\psi$ denotes the weights of the
autoregressive prior $\rho^{\psi}_0$ as in \S\ref{sec:gradient}.

Given an unnormalized target $\tilde p$ and a normalized model $q_\psi$, the
log-variance loss \citep{richter2020vargrad, nusken2021logvar} is
\begin{equation}
  \mathcal{L}^{\mathrm{VG}}_{\mathcal{R}}(\psi)
  \;=\; \operatorname{Var}_{\sigma\sim\mathcal{R}}
        \left[\log\frac{q_\psi(\sigma)}{\tilde p(\sigma)}\right],
  \label{eq:vargrad}
\end{equation}
where $\mathcal{R}$ is any distribution whose support contains those of
$q_\psi$ and $\tilde p$. 

Importantly, $\mathcal{R}$ can be independent from $q_\psi$. The loss may therefore be evaluated on states the model itself would rarely produce, which a reverse KL cannot do because its expectation is taken under the model by definition. In fact, taking $\mathcal{R}=\mathrm{sg}[q_\psi]$ (stop-gradient) recovers the leave-one-out REINFORCE gradient of the reverse KL up to a factor, meaning VarGrad contains the reverse-KL training as a special case \citep{richter2020vargrad}. 

We now prove that FrOGS with the transport switched off is VarGrad.

\begin{proposition}
\label{prop:vargrad-reduction}
At a fixed condition $c$, with the linear interpolation~\eqref{eq:linpath} and the
autoregressive prior, set $Q^\theta_t\equiv 0$ and let the reference
measure be $\mathcal{R}_s\equiv\mathcal{R}$ for all $s$. Then the sample is
stationary, $X_t=X_0\sim\mathcal{R}$, and its log-weight $A_1$ \eqref{eq:weight} is
\begin{equation}
  A_1 \;=\; \log\frac{\tilde p_c(X_0)}{\rho^{\psi}_0(X_0\mid c)},
  \label{eq:A1-degenerate}
\end{equation}
and consequently
$\operatorname{Var}[A_1]=\mathcal{L}^{\mathrm{VG}}_{\mathcal{R}}(\psi)$.
Moreover, if the free-energy head attains its conditional optimum at every $s$,
then the PINN loss \eqref{eq:pinn} equals the same quantity,
\begin{equation}
  \min_{\phi}\ \mathcal{L}( \phi,\psi;1)
  \;=\; \mathcal{L}^{\mathrm{VG}}_{\mathcal{R}}(\psi),
  \qquad\text{attained at}\quad
  \partial_s F^{\phi}_s = \mathbb{E}_{\mathcal{R}}\!\left[\partial_s U_s\right],
  \label{eq:pinn-is-vargrad}
\end{equation}
so the bound $\mathcal{L}^{\mathrm{log\text{-}var}}\le t^2\mathcal{L}$ of
Eq.~\eqref{eq:pinn} holds with equality at $t=1$.
\end{proposition}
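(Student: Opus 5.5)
With $Q^\theta_t\equiv 0$ every off-diagonal rate vanishes, so the CTMC has no jumps and $X_t=X_0$ for all $t\in[0,1]$. We read the initial law as the reference $\mathcal{R}$, consistent with taking $\mathcal{R}_s\equiv\mathcal{R}$ in the loss. The sum in the corrector \eqref{eq:corrector} also vanishes term by term. For the reparametrized corrector of Proposition~\ref{prop:reparam} the same holds, since $G^\theta\equiv 0$ there. Hence $\mathcal{K}^\theta_s\rho_s(X_s)=-\partial_s U_s(X_0)$.

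By \eqref{eq:dtUt}, on the linear path \eqref{eq:linpath} this derivative equals $\beta\mathcal{H}_c(X_0)+\log\rho^\psi_0(X_0\mid c)$ and does not depend on $s$. Integrating \eqref{eq:weight} over $[0,1]$ therefore gives
\[
A_1=-\beta\mathcal{H}_c(X_0)-\log\rho^\psi_0(X_0\mid c)=\log\frac{\tilde p_c(X_0)}{\rho^\psi_0(X_0\mid c)},
\]
which is \eqref{eq:A1-degenerate}. Variance is invariant under negation, so $\operatorname{Var}_{X_0\sim\mathcal{R}}[A_1]=\operatorname{Var}_{\mathcal{R}}[\log(\rho^\psi_0/\tilde p_c)]$. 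This is $\mathcal{L}^{\mathrm{VG}}_{\mathcal{R}}(\psi)$ with $q_\psi=\rho^\psi_0(\cdot\mid c)$. The support hypothesis needed in \eqref{eq:vargrad} is just the stated support condition on $\mathcal{R}$.

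For the PINN loss at $t=1$, the integrand is $\big|\partial_sF^\phi_s-\partial_sU_s(\sigma)\big|^2$ with $\sigma\sim\mathcal{R}$ and $s\sim\mathrm{Unif}[0,1]$. Since $\partial_s U_s(\sigma)=-A_1(\sigma)$ regardless of $s$, we can treat each $s$ separately. For fixed $s$, minimizing $\mathbb{E}_{\mathcal{R}}|a-Y|^2$ over the constant $a=\partial_sF^\phi_s$ gives $a=\mathbb{E}_{\mathcal{R}}[Y]$ and minimum value $\operatorname{Var}_{\mathcal{R}}[Y]$, by the bias--variance decomposition. Here $Y=\partial_sU_s$, so the optimum is $\partial_sF^\phi_s=\mathbb{E}_{\mathcal{R}}[\partial_sU_s]$, and the minimal value equals $\operatorname{Var}[A_1]=\mathcal{L}^{\mathrm{VG}}_{\mathcal{R}}(\psi)$ for every $s$. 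Averaging over $s$ changes nothing, which gives \eqref{eq:pinn-is-vargrad}.

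Comparing with $\mathcal{L}^{\text{log-var}}(\theta;1)=\operatorname{Var}[A_1]$ then yields equality in $\mathcal{L}^{\text{log-var}}\le t^2\mathcal{L}$ at $t=1$.

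The main obstacle is interpretive rather than computational. We must say clearly that the optimum over $\phi$ is taken pointwise in $s$, i.e. the free-energy head is flexible enough to attain the conditional mean; this is the hypothesis ``attains its conditional optimum''. We must also be careful that the degenerate sample is drawn from $\mathcal{R}$ and not from $\rho_0^\psi$, and state this explicitly. With transport on, the inequality becomes strict in general through Jensen/Cauchy--Schwarz on the time integral. Equality holds here because the integrand is constant in $s$.
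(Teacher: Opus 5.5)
Your proof is correct and follows the same route as the paper: with $Q^\theta_t\equiv 0$ the corrector reduces to $-\partial_sU_s$, which is constant in $s$ on the linear path, so $A_1$ integrates to $\log(\tilde p_c/\rho^\psi_0)$; minimizing pointwise in $s$ over the scalar $\partial_sF^\phi_s$ then leaves $\operatorname{Var}_{\mathcal{R}}[\partial_sU_s]=\operatorname{Var}[A_1]$, which averaging over $s$ does not change. One small wording point: the per-$s$ decomposition comes from the freedom of $\partial_sF^\phi_s$ (the conditional-optimum hypothesis), whereas the $s$-independence of $\partial_sU_s$ is what makes the averaged value equal the per-$s$ value.
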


\begin{proof}
With $Q^\theta_t=0$, the sum in the RHS of Eq.~\eqref{eq:corrector} is 0, so 
\begin{align*}
    \mathcal{K}_s\rho_s(\sigma)&=-\partial_s U_s(\sigma) \\
    &=-\beta\mathcal{H}_c(\sigma)-\log\rho^{\psi}_0(\sigma\mid c),
\end{align*}
by Eq.~\eqref{eq:dtUt}. A sample with zero rates never moves, so $X_t=X_0 \ \forall t$. Eq.~\eqref{eq:weight} integrates a
constant along $t = 0 \rightarrow1$ and returns Eq.~\eqref{eq:A1-degenerate}:
\begin{align*}
    A_1&=\int_0^1 \mathcal{K}^\theta_s\rho_s(X_s)\,ds \\
  &=\int_0^1 \mathcal{K}^\theta_s\rho_s(X_0)\,ds\\ &=\int_0^1(-\beta\mathcal{H}_c(X_0)-\log\rho^{\psi}_0(X_0\mid c))\,ds \\
  &=-\beta\mathcal{H}_c(X_0)-\log\rho^{\psi}_0(X_0\mid c) \\
  &=\log\frac{\tilde p_c(X_0)}{\rho^{\psi}_0(X_0\mid c)}.
\end{align*}
For the second claim, the integrand of the LEAPS PINN loss \eqref{eq:pinn} at time
$s$ is $\mathbb{E}_{\mathcal{R}}\big[|{-\partial_s U_s}+\partial_s
F^{\phi}_s|^2\big]$; minimizing this over the scalar $\partial_s F^{\phi}_s$
gives the stated optimum and the residual
$\operatorname{Var}_{\mathcal{R}}[\partial_s U_s]$, which is $s$-independent and
therefore remains unchanged after the average over $s\sim\mathrm{Unif}_{[0,1]}$.
\end{proof}

Eq.~\eqref{eq:pinn-is-vargrad} identifies $\partial_s F^{\phi}_s$ as the term that cancels out the mean and leaves the variance behind. The estimator of \citet{richter2020vargrad}
centers by the minibatch mean, which is legitimate because every sample in the
batch shares one target. A FrOGS minibatch does not since it mixes times $s$ and conditions $c$, whose free energies differ by many $k_{\mathrm{B}}T$
across the training box. The learned $F^{\phi}_s(c)$ supplies the per-$(s,c)$
baseline that repairs this. This is consistent with its status in \S\ref{sec:leapsth} as
a training-time device.

\paragraph{The general case.}
For $Q^\theta_t\neq0$, the object being controlled is the path-space
log-variance $\operatorname{Var}[A_1]$, which is
the path-measure divergence of \citet{nusken2021logvar} carried onto a discrete
state space \citep{RichterBerner2024, holderrieth2025leaps}. Two features
of the reduction nonetheless persist.

Writing
$r_s(\sigma)\equiv\tilde{\mathcal{K}}^\theta_s\rho_s(\sigma)+\partial_s F^{\phi}_s$ for
the pointwise residual of Eq.~\eqref{eq:pinn}, the gradient splits into a term
carried by the annealing path and a term carried by the Metropolis
acceptance:
\begin{align}
  \nabla_\psi\mathcal{L}
  &= 2\,\mathbb{E}_{s,\,\sigma\sim\mathcal{R}_s}
       \big[\,r_s(\sigma)\,\nabla_\psi r_s(\sigma)\,\big]
   \;=\; \mathcal{G}^{\mathrm{path}}+\mathcal{G}^{\mathrm{acc}},
  \label{eq:psi-grad}\\
  \mathcal{G}^{\mathrm{path}}
  &= -\,2\,\mathbb{E}_{s,\,\sigma\sim\mathcal{R}_s}
       \Big[\,r_s(\sigma)\ \nabla_\psi\log\rho^{\psi}_0(\sigma\mid c)\Big],
  \label{eq:psi-grad-path}\\
  \mathcal{G}^{\mathrm{acc}}
  &= 2\,\mathbb{E}_{s,\,\sigma\sim\mathcal{R}_s}
       \Big[\,r_s(\sigma)\!\!\sum_{i,\,\tau\neq\sigma_i}\!\!
         (1-s)\,M_s\,G_s^\theta\,\mathbf{1}\{\Delta_{i\tau}U_s>0\}\,
         \nabla_\psi\log\frac{\rho^{\psi}_0(\sigma'\mid c)}
                             {\rho^{\psi}_0(\sigma\mid c)}\Big].
  \label{eq:psi-grad-acc}
\end{align}

\begin{proof}[Derivation]
Differentiating $\mathcal{L}=\mathbb{E}[r_s^2]$ gives the first equality of
Eq.~\eqref{eq:psi-grad}, the reference $\mathcal{R}_s$ being held fixed and
$\partial_sF^{\phi}_s$ carrying no $\psi$, so that
$\nabla_\psi r_s=\nabla_\psi\tilde{\mathcal{K}}^\theta_s\rho_s$. By
Proposition~\ref{prop:reparam},
\begin{equation*}
  \tilde{\mathcal{K}}^\theta_s\rho_s(\sigma)
  = -\,\partial_s U_s(\sigma)
    +\sum_{i,\,\tau\neq\sigma_i}
      M_s(\tau,i\mid\sigma)\,G^{\theta}_s(\tau,i\mid\sigma),
\end{equation*}
and $G^{\theta}_s$ is a function of $(\sigma,s,c)$ alone. The prior therefore
enters in two places.
 
For $\mathcal{G}^{\mathrm{path}}$,

\begin{align*}
    U_t(\sigma\mid c) &= -(1-t)\log\rho^{\psi}_0(\sigma\mid c) + t\,U_1(\sigma\mid c) \ \text{by Eq.}~\eqref{eq:linpath}, \\
    \nabla_\psi[-\partial_sU_s(\sigma)]&=-\nabla_\psi\log\rho^{\psi}_0(\sigma\mid c).
\end{align*}
 
For $\mathcal{G}^{\mathrm{acc}}$, differentiate
$M_s=e^{-[\Delta_{i\tau}U_s]_+}$ away from the measure-zero kink at
$\Delta_{i\tau}U_s=0$:
\begin{align*}
  \nabla_\psi M_s
  &= -\,M_s\,\mathbf{1}\{\Delta_{i\tau}U_s>0\}\,\nabla_\psi\Delta_{i\tau}U_s, \\
  \qquad
  \nabla_\psi \Delta_{i\tau}U_s
  &= -(1-s)\,\nabla_\psi\log\frac{\rho^{\psi}_0(\sigma'\mid c)}
                                {\rho^{\psi}_0(\sigma\mid c)},
\end{align*}
the second line being the single-move difference of \S\ref{sec:gradient}.
Multiplying by $G^{\theta}_s$ and summing gives $\mathcal{G}^{\mathrm{acc}}$.
\end{proof}

We apply a stop gradient to the $\mathcal{G}^{\mathrm{acc}}$ term, so that only the $\mathcal{G}^{\mathrm{path}}$ term flows into the autoregressive network. Thus, our setup is the generalized case of VarGrad for the autoregressive network part.

% We now argue $\mathcal{G}^{\mathrm{acc}}$ is secondary to $\mathcal{G}^{\mathrm{path}}$. For the sigmoid
% parametrization of $\rho^{\psi}_0$, the derivative of the per-site
% $\log\rho^{\psi}_0$ with respect to the head's pre-activation output is
% $1-\rho^{\psi}_0\in(-1,1)$, bounded regardless of how small
% $\rho^{\psi}_0$ becomes. $G_s^\theta$ is a continuous function of $\theta$ and is also bounded. Thus, on a
% compact weight set $\Theta$, which the weights stay in after finitely many bounded updates, every factor of the $\mathcal{G}^{\mathrm{acc}}$ summand is
% bounded uniformly in $(\sigma,i,\tau)$. Three structural factors then suppress it
% relative to $\mathcal{G}^{\mathrm{path}}$: the indicator
% $\mathbf{1}\{\Delta_{i\tau}U_s>0\}$ discards downhill moves,
% $M_s=e^{-[\Delta_{i\tau}U_s]_+}<1$ damps the rest, and the prefactor $(1-s)$
% removes the term as $s\to1$, whereas $\mathcal{G}^{\mathrm{path}}$ carries no such
% prefactor and retains full weight along the path. We therefore focus on
% $\mathcal{G}^{\mathrm{path}}$.

$\mathcal{G}^{\mathrm{path}}$ raises $\log\rho^{\psi}_0(\sigma\mid c)$ on reference states
where $r_s>0$ and lowers it where $r_s<0$. The reference $\mathcal{R}_s$ is a buffer of states generated earlier, so $\nabla_\psi\mathcal{R}_s=0$ and the buffer
drifts away from the current prior under a working transport. Against a reference that covers a mode, abandoning
that mode is penalized: if $\rho^{\psi}_0$ leaves a set $\mathcal{A}$ empty on which
$\tilde p_c$ has mass while $\mathcal{R}_s$ still samples $\mathcal{A}$, then
$\ell=\log(\rho^{\psi}_0/\tilde p_c)\to-\infty$ there and inflates
$\operatorname{Var}_{\mathcal{R}_s}[\ell]$. 

Preventing mode-collapse requires $\mathcal{R}_s$ overlapping with $\mathcal{A}$. LEAPS requires $\operatorname{supp}\mathcal{R}$ to cover the
path for Prop.~6.1 to identify the minimizer, and the log-variance loss
constrains $\rho^{\psi}_0$ only on $\operatorname{supp}\mathcal{R}$. What we claim is that the independence of $\mathcal{R}$ does not let mode collapse be self-reinforcing. In contrast, under a reverse KL the expectation is taken under the model, so a mode the model has
already abandoned produces no gradient and cannot be recovered. Here the states $\mathcal{R}_s$ supplies are set by two mechanisms independent of how well $Q^\theta_t$ is trained: the prior is initialized near-uniform; the $n_{\mathrm{mcmc}}$ injected moves are Metropolis moves with respect to $\rho_s$, mixing on the flatter landscape at
small $s$. Coverage on the
systems reported here is thus something that is empirically verified, and Table~\ref{tab:chi2} is the measurement.

\subsection{Proof of the local equivariance of the U-net pyramid}
\label{sec:proof}

For the probability flux factorization~\eqref{eq:locequiv}, local equivariance follows if
the head is unchanged by the proposed single-site move:
\begin{equation}
  H_t^\theta(i\mid\mathrm{Swap}(\sigma,i,\tau))
   =H_t^\theta(i\mid\sigma).
  \label{eq:locinv}
\end{equation}
We prove this by tracking the physical sites that each feature can depend on
through pooling, convolution, and decoding.

Let $\Lambda$ denote the index set of the original lattice sites.
For each resolution level $\ell$ of the U-net pyramid, let $\Lambda_\ell$
denote the grid's index set. Level zero is the original grid, so
$\Lambda_0=\Lambda$, and each subsequent grid is obtained by pooling the previous one. A \emph{feature field} on $\Lambda_\ell$ is a collection
$f=(f_p)_{p\in\Lambda_\ell}$ of scalar or vector features, with
$f_p=f_p(\sigma,t,c)$ at each index $p$.

Starting from $x^{(0)}=\sigma$, non-overlapping average pooling with
per-axis strides $s^{(\ell)}\in\{1,2\}^d$ gives
\begin{equation}
  x_q^{(\ell+1)}
   =\frac{1}{|\pi_\ell^{-1}(q)|}
       \sum_{p\in\pi_\ell^{-1}(q)}x_p^{(\ell)},
  \qquad
  (\pi_\ell(p))_a=\lfloor p_a/s_a^{(\ell)}\rfloor,
  \label{eq:pool}
\end{equation}
where $\pi_\ell:\Lambda_\ell\to\Lambda_{\ell+1}$ maps each index to its
parent in the pooled grid, and $a=1,\dots,d$ indexes the coordinate axes.
Here $d=2$ for Ising and $d=3$ for the alloys. We include anisotropic
strides in the proof because the FrOGS sampler for CuAu contains a
$(1,1,2)$ stride for its $4\times4\times8$ site system.
Let $B_\ell(p)\subseteq\Lambda$ be the block of
original sites represented by index $p\in\Lambda_\ell$:
\begin{equation*}
  B_0(i)=\{i\},\qquad
  B_{\ell+1}(q)=\bigcup_{p\in\pi_\ell^{-1}(q)}B_\ell(p).
\end{equation*}
A level-$\ell$ field $f$ is \emph{cell-blind} if $f_p\text{ depends only on }
  \sigma_{\Lambda\setminus B_\ell(p)},\ t,\ c.$
At level zero, this means that the feature at site $i$ does not depend on
that site's occupation $\sigma_i$.

For the binary systems, the projector (which we inherit from LEAPS)
$P_t^\theta(\tau)=-\tfrac12(1+t)\tau w$ gives
\begin{equation}
  g_t(i\mid\sigma)
   \equiv G_t^\theta(-\sigma_i,i\mid\sigma)
   =(1+t)\sigma_iw^\top f_i^{(0)}(\sigma,t,c),
  \label{eq:binary-head}
\end{equation}
where $w\in\mathbb R^C$ is learned and $f^{(0)}$ is the finest-scale decoder
output. If $f^{(0)}$ is cell-blind, flipping $\sigma_i$ leaves
$f_i^{(0)}$ unchanged and negates $g_t(i\mid\sigma)$, as required.
For $K\ge2$, the readout
$(w_\tau-w_{\sigma_i})^\top f_i^{(0)}$ has the same property.
The cell-blind property of $f^{(0)}$ is the desired local invariance,
so proving the property is our goal.

\begin{lemma}[Pooling and upsampling]
\label{lem:partition}
The cells $B_\ell(p)$ partition $\Lambda$ and satisfy
$B_\ell(p)\subseteq B_{\ell+1}(\pi_\ell(p))$.
Nearest-neighbor upsampling to the original axis length returns each
index's own parent: for an axis of length $L$, stride $s\in\{1,2\}$, and
pooled length $m=\lceil L/s\rceil$,
\begin{equation}
  \lfloor nm/L\rfloor=\lfloor n/s\rfloor,
  \qquad n=0,\dots,L-1.
  \label{eq:upsample-parent}
\end{equation}
\end{lemma}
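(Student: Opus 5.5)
The lemma makes two claims, and I will prove them separately.

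\textbf{Partition and nesting.} I will use induction on $\ell$.

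For the base case, the singletons $B_0(i)=\{i\}$ clearly partition $\Lambda_0=\Lambda$.

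For the inductive step, assume the cells $\{B_\ell(p)\}_{p\in\Lambda_\ell}$ partition $\Lambda$. The map $\pi_\ell:\Lambda_\ell\to\Lambda_{\ell+1}$ is a function, so the fibres $\pi_\ell^{-1}(q)$ for $q\in\Lambda_{\ell+1}$ partition $\Lambda_\ell$. They are nonempty because $\Lambda_{\ell+1}$ is defined as the image of $\pi_\ell$; the pooled grid has length $\lceil L/s\rceil$ per axis, which is exactly the range of $\lfloor p_a/s_a\rfloor$. Taking $B_{\ell+1}(q)=\bigcup_{p\in\pi_\ell^{-1}(q)}B_\ell(p)$ merges the blocks of a partition along another partition, so the result is again a partition of $\Lambda$.

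The inclusion $B_\ell(p)\subseteq B_{\ell+1}(\pi_\ell(p))$ holds because $p\in\pi_\ell^{-1}(\pi_\ell(p))$, so $B_\ell(p)$ is one of the sets in the union defining $B_{\ell+1}(\pi_\ell(p))$.

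\textbf{Upsampling identity.} This is elementary arithmetic, done by cases on the stride.

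If $s=1$, then $m=L$ and both sides of the identity equal $n$.

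If $s=2$, write $n=2k+r$ with $r\in\{0,1\}$, so the target value is $k$. I then need
\[
k\le \frac{nm}{L}<k+1, \qquad m=\lceil L/2\rceil .
\]
\emph{Even $L$.} Here $m=L/2$, so $nm/L=n/2$, whose floor is $k$.

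\emph{Odd $L$.} Here $m=(L+1)/2$ and
\[
\frac{nm}{L}=\frac{n}{2}\Bigl(1+\frac1L\Bigr)=\frac n2+\frac{n}{2L}.
\]
The lower bound $nm/L\ge n/2\ge k$ is immediate. For the upper bound, $n\le L-1$ gives $\frac{n}{2L}<\frac12$. Hence
\[
\frac{nm}{L}<k+\frac r2+\frac12\le k+1 .
\]

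The only delicate point is the odd-$L$, $r=1$ case, where the bound $n\le L-1$ is exactly what keeps the sum below $k+1$; I will state that case carefully.

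Finally, since the upsampling acts separately on each axis, the per-axis identity yields the full multi-index statement $\mathrm{up}(n)=\pi_\ell(n)$. That is the sense in which upsampling "returns each index's own parent."
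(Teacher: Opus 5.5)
Your proof is correct and follows essentially the same route as the paper: the fibres of $\pi_\ell$ partition each level, the recursive union then yields the partition of $\Lambda$ and the nesting, and the upsampling identity is checked case by case in the stride. The only cosmetic difference is that for $s=2$ you split on the parity of $L$, whereas the paper treats both parities at once by writing $e=2m-L\in\{0,1\}$ and bounding $\frac r2+\frac{ne}{2L}<1$.
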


\begin{proof}
The nonempty sets $\pi^{-1}(k)=[ks,(k+1)s)\cap\{0,\dots,L-1\}$
partition each axis. Their products partition the grid, and the recursive
definition of $B_\ell$ gives nesting.
For $s=1$, Eq.~\eqref{eq:upsample-parent} is immediate.
For $s=2$, write $e=2m-L\in\{0,1\}$ and $n=2\lfloor n/2\rfloor+r$,
where $r\in\{0,1\}$. Then
\begin{equation*}
  \frac{nm}{L}=\lfloor n/2\rfloor+\frac r2+\frac{ne}{2L},
  \qquad 0\le\frac r2+\frac{ne}{2L}<1,
\end{equation*}
which proves the identity, including for odd $L$.
\end{proof}

\begin{lemma}[State-dependent convolution]
\label{lem:sdc}
Let $h$ be cell-blind at level $\ell$, and suppose every axis at that level
has length at least two. Under circular indexing, the center-masked convolution
\begin{equation}
  y_p=\sum_{\delta\in\{-1,0,1\}^d}
         k_\delta(h_p,t)x_{p\oplus\delta}^{(\ell)},
  \qquad k_0=0,
\end{equation}
is cell-blind.
\end{lemma}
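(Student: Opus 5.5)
The plan is to check cell-blindness term by term. Fix $p\in\Lambda_\ell$ and look at each summand $k_\delta(h_p,t)\,x^{(\ell)}_{p\oplus\delta}$. The kernel factor $k_\delta(h_p,t)$ depends on $\sigma$ only through $h_p$. Since $h$ is cell-blind, $h_p$ depends only on $\sigma_{\Lambda\setminus B_\ell(p)}$, $t$ and $c$, and $k_\delta$ adds only the explicit $t$. So all kernel factors are admissible. The only remaining dependence enters through the pooled input $x^{(\ell)}_{p\oplus\delta}$.

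For the input factor, I would first show by induction on Eq.~\eqref{eq:pool} that $x^{(\ell)}_q$ is a function of $\sigma_{B_\ell(q)}$ alone. At level zero, $x^{(0)}_i=\sigma_i$ and $B_0(i)=\{i\}$. For the inductive step, $x^{(\ell+1)}_q$ averages $x^{(\ell)}_p$ over $p\in\pi_\ell^{-1}(q)$, and the union of the corresponding $B_\ell(p)$ is exactly $B_{\ell+1}(q)$.

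The key step is then the following. For $\delta\neq0$, I claim $p\oplus\delta\neq p$ and hence $B_\ell(p\oplus\delta)\cap B_\ell(p)=\emptyset$. Since the cells $B_\ell(\cdot)$ partition $\Lambda$ by Lemma~\ref{lem:partition}, distinct indices have disjoint cells. So it suffices that $p\oplus\delta\neq p$. Because $\delta\neq0$, some coordinate $a$ has $\delta_a=\pm1$. Circular indexing on an axis of length $L_a\ge2$ gives $p_a\pm1\not\equiv p_a \pmod{L_a}$. This is precisely where the hypothesis that every axis has length at least two enters. With length one, $\pm1$ would wrap back to $p$ and reintroduce the cell's own occupation. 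Consequently $x^{(\ell)}_{p\oplus\delta}$ depends only on $\sigma_{B_\ell(p\oplus\delta)}\subseteq\sigma_{\Lambda\setminus B_\ell(p)}$.

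The $\delta=0$ term, which is the only one that could read $\sigma_{B_\ell(p)}$, is removed by the mask $k_0=0$. Combining the three points, each summand and therefore $y_p$ depends only on $\sigma_{\Lambda\setminus B_\ell(p)}$, $t$ and $c$, so $y$ is cell-blind.

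The main subtlety I expect is the wrap-around argument. Different offsets $\delta$ may land on the same neighbor, which is harmless. What must be ruled out is any nonzero offset landing on $p$ itself, and that is exactly what the length condition excludes. I would also remark that if $x^{(\ell)}$ is replaced by vector features, the same argument applies verbatim, since only the support of the dependence matters.
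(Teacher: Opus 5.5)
Your proof is correct and follows the same route as the paper's: the kernel coefficients inherit cell-blindness from $h_p$, the length-two condition ensures no nonzero offset wraps back to $p$ (so each retained input reads a cell disjoint from $B_\ell(p)$ by the partition property), and the mask $k_0=0$ removes the only term that could touch $B_\ell(p)$. You also spell out the induction showing that $x^{(\ell)}_q$ depends only on $\sigma_{B_\ell(q)}$, which the paper uses without stating it.
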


\begin{proof}
The coefficients depend only on sites outside $B_\ell(p)$.
With axis lengths at least two, no nonzero offset wraps to
$p$, so every retained $x_{p\oplus\delta}^{(\ell)}$ also depends on a cell
disjoint from $B_\ell(p)$. Products and sums preserve this exclusion.
\end{proof}

The axis-length condition is critical. On a length-one axis, a noncentral
offset can wrap back to the excluded cell.

\begin{lemma}[Pointwise operations and global context]
\label{lem:pointwise}
Cell-blind fields remain cell-blind under pointwise activations, channel
mixing, concatenation, and FiLM modulation
$\gamma(c)\odot h_p+\beta(c)$. The leave-one-out context
\begin{equation*}
  \Gamma_p=
   \frac{\sum_{q\in\Lambda_\ell}x_q^{(\ell)}-x_p^{(\ell)}}{|\Lambda_\ell|-1}
\end{equation*}
is also cell-blind and may be concatenated with these fields.
\end{lemma}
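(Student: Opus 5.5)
The plan is to argue directly from the definition of cell-blindness: a level-$\ell$ field $f$ is cell-blind when each $f_p$ is a function of $(\sigma_{\Lambda\setminus B_\ell(p)},t,c)$ alone. Every operation in the statement acts index-by-index, so it is enough to check that the output at index $p$ is built only from quantities that are themselves functions of $(\sigma_{\Lambda\setminus B_\ell(p)},t,c)$. Composing functions of these arguments cannot introduce a dependence on $\sigma_{B_\ell(p)}$.

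\emph{Pointwise operations.} For a pointwise activation $\phi$, $\phi(h_p)$ is a function of $h_p$ alone. For channel mixing, $W h_p + b$ with $W,b$ independent of $\sigma$ is a function of $h_p$ alone. For concatenation, $(h_p,h'_p)$ is built from two fields, each a function of $(\sigma_{\Lambda\setminus B_\ell(p)},t,c)$, so the pair is too. Note that both fields must live on the same level $\ell$, so that they exclude the same cell $B_\ell(p)$. For FiLM, $\gamma(c)$ and $\beta(c)$ depend only on the condition $c$. They are also spatially uniform, as stated in \S\ref{sec:coupling}. Hence $\gamma(c)\odot h_p+\beta(c)$ adds no dependence on $\sigma$ at all. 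If FiLM is also allowed to depend on $t$, the argument is unchanged, since $t$ is among the permitted arguments.

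\emph{Leave-one-out context.} This is the only step with real content, because $\Gamma_p$ is built from the raw pooled lattice $x^{(\ell)}$, not from a field already known to be cell-blind. First I will rewrite the numerator as $\sum_{q\neq p}x_q^{(\ell)}$. Next I will show by induction on the pooling recursion \eqref{eq:pool} that $x_q^{(\ell)}$ depends only on $\sigma_{B_\ell(q)}$. At level zero, $x_i^{(0)}=\sigma_i$ and $B_0(i)=\{i\}$. At the next level, $x_q^{(\ell+1)}$ averages the children $x_p^{(\ell)}$ with $p\in\pi_\ell^{-1}(q)$, and their cells union to $B_{\ell+1}(q)$. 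By Lemma~\ref{lem:partition}, the cells $B_\ell(q)$ partition $\Lambda$. Therefore every $q\neq p$ has $B_\ell(q)\cap B_\ell(p)=\emptyset$, and each summand depends only on $\sigma_{\Lambda\setminus B_\ell(p)}$. The denominator $|\Lambda_\ell|-1$ is a constant. It is nonzero as long as $|\Lambda_\ell|\ge 2$, which I will note as the standing requirement, in the same spirit as the axis-length caveat after Lemma~\ref{lem:sdc}. So $\Gamma_p$ is cell-blind, and by the concatenation case it can be appended to any cell-blind field on the same level.

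I expect the main obstacle to be the bookkeeping in the pooled-lattice induction, namely making sure that $x^{(\ell)}_q$ really is supported on exactly $B_\ell(q)$. Once that holds, disjointness from Lemma~\ref{lem:partition} closes the argument. Everything else in the lemma is routine.
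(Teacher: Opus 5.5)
Your proposal is correct and follows the paper's proof. Pointwise activations, channel mixing, concatenation and FiLM use only the input fields at $p$ together with $c$ (and $t$), and $\Gamma_p$ is cell-blind because its numerator equals $\sum_{q\neq p}x_q^{(\ell)}$, whose cells are disjoint from $B_\ell(p)$ by Lemma~\ref{lem:partition}. Your induction showing that $x_q^{(\ell)}$ depends only on $\sigma_{B_\ell(q)}$, and your remark that the denominator needs $|\Lambda_\ell|\ge 2$, spell out steps the paper leaves implicit.
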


\begin{proof}
Pointwise operations use only the input fields, while FiLM adds dependence
on $c$ alone. The sum defining $\Gamma_p$ excludes $B_\ell(p)$ because all
remaining cells are disjoint from it.
\end{proof}

Thus, local invariance permits global context as long as that context excludes
the site's own cell.

\begin{proposition}
\label{prop:locinv}
If every axis at the coarsest level has length at least two, all encoder,
upsampled, and decoder features are cell-blind. In particular, $f^{(0)}$
satisfies Eq.~\eqref{eq:locinv}.
\end{proposition}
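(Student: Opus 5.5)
The plan is an induction over the U-net's computational graph, in the order the network evaluates its features: first down the encoder pyramid, then back up through the decoder. The base case is the first lattice-derived feature at level zero. The encoder's first block is a center-masked state-dependent convolution whose coefficients are computed from an input that does not read $\sigma_i$. That input is either a constant or time/condition embedding, or the leave-one-out context. A constant field is trivially cell-blind, so Lemma~\ref{lem:sdc} (with $k_0=0$) and Lemma~\ref{lem:pointwise} make the level-zero encoder output cell-blind.

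For the inductive step down the encoder, I will show that cell-blindness at level $\ell$ yields cell-blindness at level $\ell+1$.

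\textbf{Pooled lattice input.} The raw occupations $x^{(\ell+1)}$ produced by Eq.~\eqref{eq:pool} are not themselves cell-blind. They are only ever used as the neighbor values $x_{p\oplus\delta}$ with $\delta\neq0$ inside Lemma~\ref{lem:sdc}. They also enter through $\Gamma_p$, which Lemma~\ref{lem:pointwise} already treats.

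\textbf{Coefficients.} The coefficients at level $\ell+1$ come from a cell-blind field $h$ at that level. I will construct $h$ from the level-$\ell$ feature $f^{(\ell)}$ by the same pooling. The key check is that pooling a cell-blind field is cell-blind. For $q\in\Lambda_{\ell+1}$, each child $p\in\pi_\ell^{-1}(q)$ has $f_p$ depending only on $\sigma_{\Lambda\setminus B_\ell(p)}$. However, $\Lambda\setminus B_\ell(p)$ includes the sibling cells of $p$, which lie inside $B_{\ell+1}(q)$, so a naive average is \emph{not} cell-blind at the coarser level. I therefore expect the encoder at level $\ell+1$ to feed its convolution coefficients only from fields built at level $\ell+1$: the constant/FiLM input and the leave-one-out context $\Gamma$ on $\Lambda_{\ell+1}$, plus pooled raw occupations used only as neighbors. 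Each block is then cell-blind by Lemmas~\ref{lem:sdc} and~\ref{lem:pointwise} applied at its own level, with the hypothesis that every axis at every level has length at least two. This follows from the assumption at the coarsest level, since axis lengths are nonincreasing under pooling. I will pin this reading of the architecture to the description of the encoder blocks in \S\ref{sec:arch} (\enquote{a single lattice-derived feature}).

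\textbf{Decoder and skip connections.} Moving up the decoder, I need to show that upsampling a cell-blind level-$(\ell+1)$ field to level $\ell$ gives a cell-blind level-$\ell$ field. By Lemma~\ref{lem:partition} (Eq.~\eqref{eq:upsample-parent}), nearest-neighbor upsampling assigns index $p$ the value at its parent $\pi_\ell(p)$. That value depends only on sites outside $B_{\ell+1}(\pi_\ell(p))$, and since $B_\ell(p)\subseteq B_{\ell+1}(\pi_\ell(p))$, those sites are also outside $B_\ell(p)$. Hence the upsampled field is cell-blind; this direction is the easy one. The skip connection carries the encoder feature at the same level, which is cell-blind by the encoder induction. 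Concatenation, the $1\times1\times1$ mixing, and FiLM preserve cell-blindness by Lemma~\ref{lem:pointwise}, and the decoder's subsequent convolutions are handled by Lemma~\ref{lem:sdc} again. Descending to level zero gives that $f^{(0)}$ is cell-blind, which is exactly Eq.~\eqref{eq:locinv}. Combined with the readout Eq.~\eqref{eq:binary-head}, this yields local equivariance.

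\textbf{Main obstacle.} The main difficulty is the encoder transition: asymmetric inclusion makes fine-to-coarse passage of derived features non-automatic, whereas coarse-to-fine is free. I would resolve it by verifying that each encoder level recomputes its features from level-specific cell-blind inputs, as above. If the architecture does pass pooled features down, the fallback is to note that the coefficient field must be built from the level-$(\ell+1)$ leave-one-out context rather than from averaged children.
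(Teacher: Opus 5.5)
Your proposal is correct and follows the paper's proof. As the paper states, each encoder level begins afresh with a fixed center-masked convolution of its own pooled lattice $x^{(\ell)}$, so (as you suspected) no fine-to-coarse passage of features is needed, and your ``induction down the encoder'' is really a level-by-level application of Lemmas~\ref{lem:sdc} and~\ref{lem:pointwise}; the decoder step is the same coarse-to-fine nesting argument via Lemma~\ref{lem:partition}. You also make explicit something the paper leaves implicit: the length-two hypothesis at the coarsest level holds at every level, because pooling with strides in $\{1,2\}$ never lengthens an axis.
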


\begin{proof}
At each level, the first encoder layer is a fixed center-masked convolution,
a special case of Lemma~\ref{lem:sdc} with constant coefficients.
Subsequent state-dependent convolutions, FiLM layers, activations, and
optional leave-one-out concatenations preserve cell-blindness by
Lemmas~\ref{lem:sdc} and~\ref{lem:pointwise}. Hence every encoder output
$u^{(\ell)}$ is cell-blind.

For the decoder, proceed from the coarsest level to the finest. If
$f^{(\ell+1)}$ is cell-blind, its value upsampled to $p$ comes from
$\pi_\ell(p)$ and excludes the entire parent cell
$B_{\ell+1}(\pi_\ell(p))$ by Lemma~\ref{lem:partition}.
It therefore excludes $B_\ell(p)$. Concatenation with the encoder skip,
$1\times1$ channel mixing, FiLM, and the refinement convolutions preserve
this property. Induction yields cell-blindness of $f^{(0)}$, and
$B_0(i)=\{i\}$ gives Eq.~\eqref{eq:locinv}.
\end{proof}

\paragraph{Numerical verification.}
We check the trained AgPd ($5^3$, two isotropic pooling levels) and CuAu
($4\times4\times8$, one anisotropic then one isotropic pooling level)
networks at every site (Table~\ref{tab:equivariance_checks}). We also check
a sweep over the
conditioning box at fixed $(\sigma,t)$ and a control with the center mask
removed while retaining the trained weights.
\begin{table}[t]
  \centering
  \caption{Numerical checks of local equivariance. Here
  $\sigma'=\mathrm{flip}_i\sigma$ and $\varepsilon_{\mathrm{fp32}}=1.19\times10^{-7}$.}
  \vspace{\baselineskip}
  \begin{tabular}{p{0.63\linewidth}p{0.25\linewidth}}
    \toprule
    Check & Observed residual \\
    \midrule
    Flux sign: $|g_t(i\mid\sigma)+g_t(i\mid\sigma')|$
      & $O(\varepsilon_{\mathrm{fp32}}|g|)$ \\
    Head invariance: change in $w^\top f_i^{(0)}$ after a flip
      & $\sim10^{-7}$ \\
    Relative flux residual across the conditioning box
      & $\sim10^{-7}$ \\
    Flux sign with the center mask removed
      & $\sim10^{-1}$ \\
    \bottomrule
  \end{tabular}
  \label{tab:equivariance_checks}
\end{table}

 \subsection{The FiLM setup}
 \label{sec:film}
 We first rescale $c$ per axis to a normalized $\tilde c\in[-1,1]^2$ by $\tilde c = 2(c-\ell)/(u-\ell)-1$, where $[\ell,u]$ is the training box (e.g.\ $\Delta\mu\in[-0.4,0.4]$ and $T\in[200,900]\,$K for AgPd). 
The two conditioning axes differ by three orders of magnitude, so without the normalization the temperature channel would dominate the first linear layer of the conditioning MLP at initialization. We follow the generic setup: a small MLP maps $\tilde c$ to a per-channel scale $\gamma(\tilde c)$ and shift $\beta(\tilde c)$, and each hidden feature map $h$ is modulated as $\gamma(\tilde c)\odot h + \beta(\tilde c)$. The MLP is zero-initialized so that $\gamma\equiv 1,\ \beta\equiv 0$ (the identity) at step $0$. 

\subsection{Re-indexing the fcc lattice onto an integer grid}
\label{app:reindex}

Both networks of \S\ref{sec:method} are three-dimensional convolutions, which act on a dense integer array. The sites of an fcc alloy sit on a non-orthogonal Bravais lattice, and embedding
them in the conventional cubic cell leaves half of the sites empty. We instead label each site by its coordinates in the primitive basis, which turns the fcc lattice into a dense $(i,j,k)$ grid (Fig.~\ref{fig:fcc-reindex}). 

Write the fcc primitive vectors in terms of the conventional lattice parameter
$a$,
\begin{equation}
  \mathbf{a}_1 = \tfrac{a}{2}(0,1,1), \qquad
  \mathbf{a}_2 = \tfrac{a}{2}(1,0,1), \qquad
  \mathbf{a}_3 = \tfrac{a}{2}(1,1,0),
  \label{eq:primitive-basis}
\end{equation}
and assign to the integer triple $(i,j,k) \in \Lambda$ the physical site
$\mathbf{r}(i,j,k) = i\,\mathbf{a}_1 + j\,\mathbf{a}_2 + k\,\mathbf{a}_3
= \tfrac{a}{2}(j{+}k,\; i{+}k,\; i{+}j)$.
The map is a bijection from $\mathbb{Z}^3$ onto the fcc lattice. Geometrically it is a shear
that tilts the $60^\circ$ angles between the primitive vectors to
$90^\circ$, so the lattice becomes simple-cubic in index space while
$E(\sigma)$ \eqref{eq:ce} continues to be
evaluated on the physical geometry to preserve the original thermodynamics.

\begin{figure}
    \centering
    \includegraphics[width=0.9\linewidth]{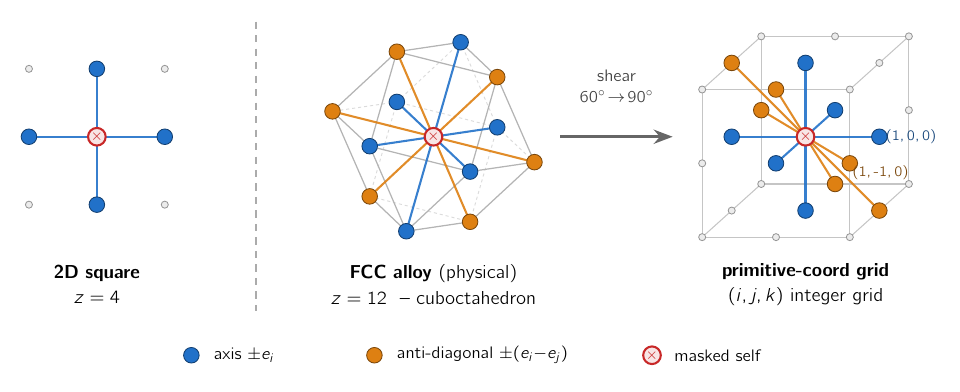}
    \caption{Diagram explaining the re-indexing of the fcc lattice onto an integer grid. The Ising model is on a 2D square lattice, so it needs no re-indexing (left). AgPd and CuAu are both 3D fcc lattices, so the shearing transformation can pack the crystal into a dense 3D simple cubic grid.}
    \label{fig:fcc-reindex}
\end{figure}

\subsection{Ablation study of prior-only and transport-only networks}
\label{sec:ablation}

Fig.~\ref{fig:ablation_ess} shows the ESS heatmaps of two ablated models, one with the transport set to identity ($Q^\theta_t = 0, \ \forall t$), and the other with the prior set to uniform following the original LEAPS setup. Both are trained under the same conditions as the full FrOGS model. Both models sample substantially less effectively than FrOGS (Fig.~\ref{fig:ess}). Notably, the ESS of the uniform prior model collapses $\rightarrow1/M$ in the ordered region, indicating the model cannot set up a transport to balance weights across samples. This ablation demonstrates that both components contribute substantially.

Fig.~\ref{fig:modecollapse-nl2enc1-uniform} shows that the raw output is balanced across degenerate modes, but the distribution after importance weighting is collapsed to one mode. The variance in the importance weights explained in the first few paragraphs of \S\ref{sec:phasemix} causes this imbalance, since only a few samples have large importance weights. 

\begin{figure}
  \centering
  \begin{subfigure}[b]{0.43\linewidth}
    \includegraphics[width=\linewidth]{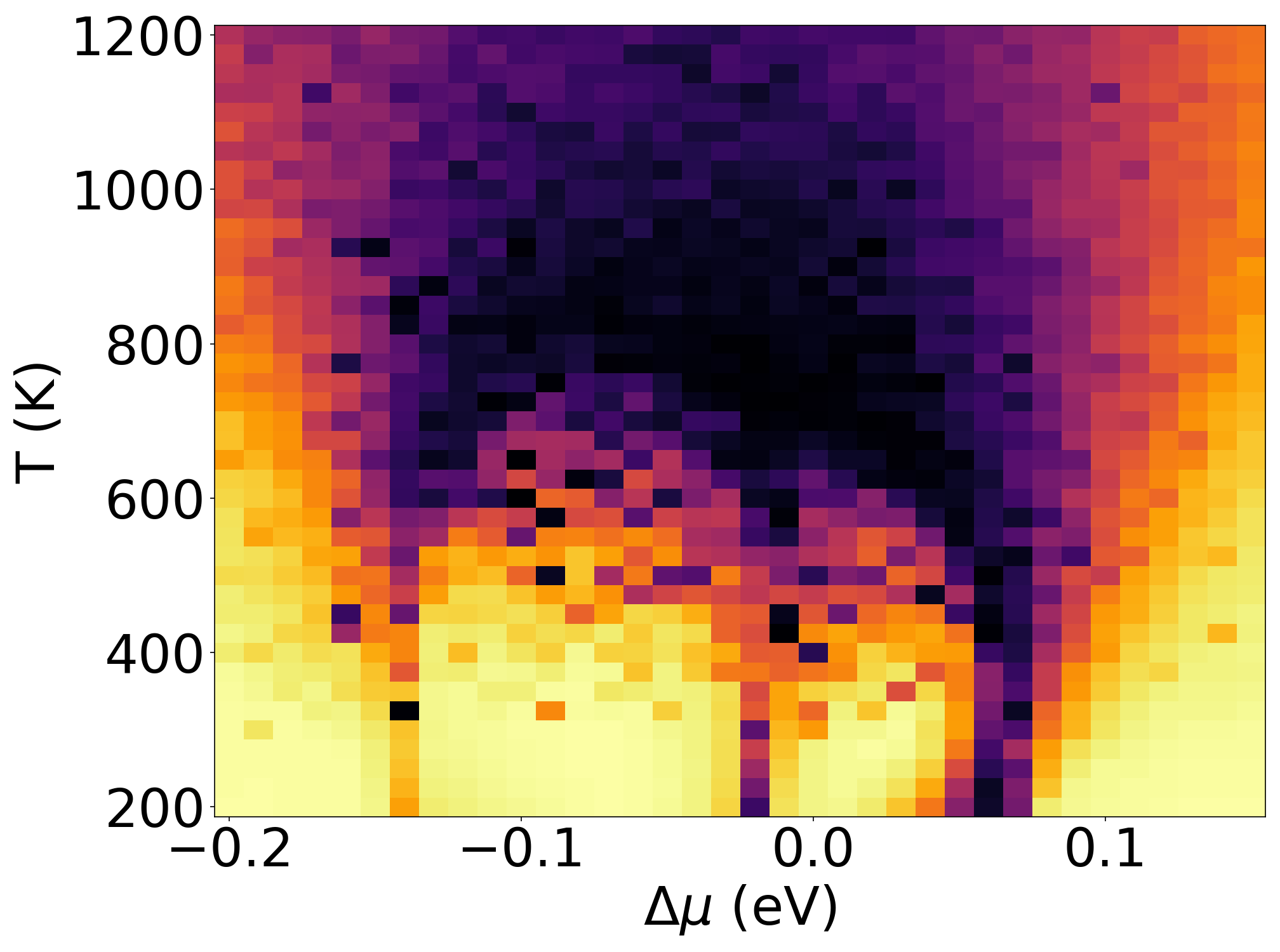}
    \caption{autoregressive only}
    \label{fig:auto_ess}
  \end{subfigure}
  \hfill
  \begin{subfigure}[b]{0.48\linewidth}
    \includegraphics[width=\linewidth]{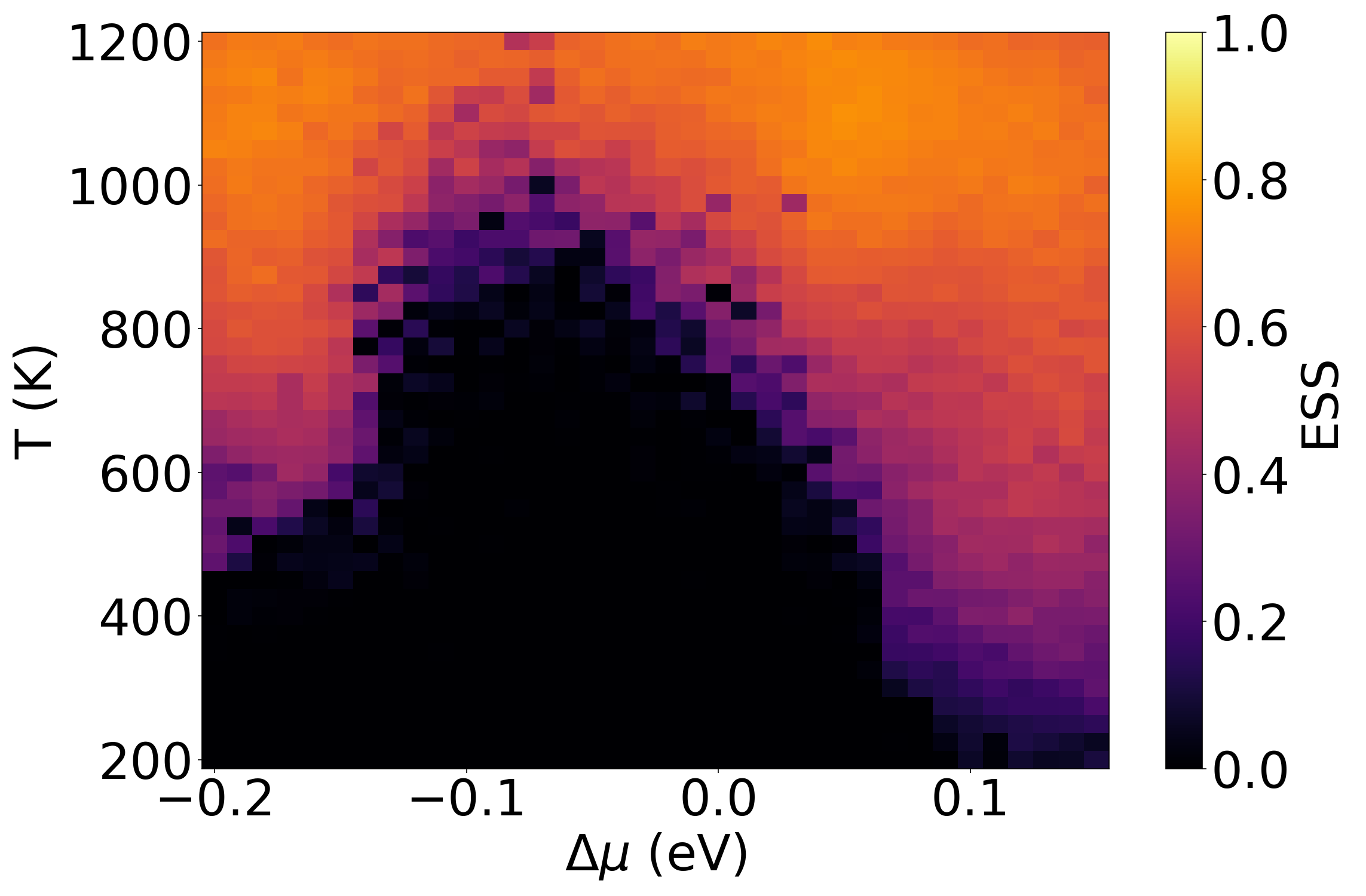}
    \caption{transport only}
    \label{fig:transport_ess}
  \end{subfigure}
  \caption{ESS plots of the ablated models.}
  \label{fig:ablation_ess}
\end{figure}

\begin{figure}[t]
\centering
\includegraphics[width=\textwidth]{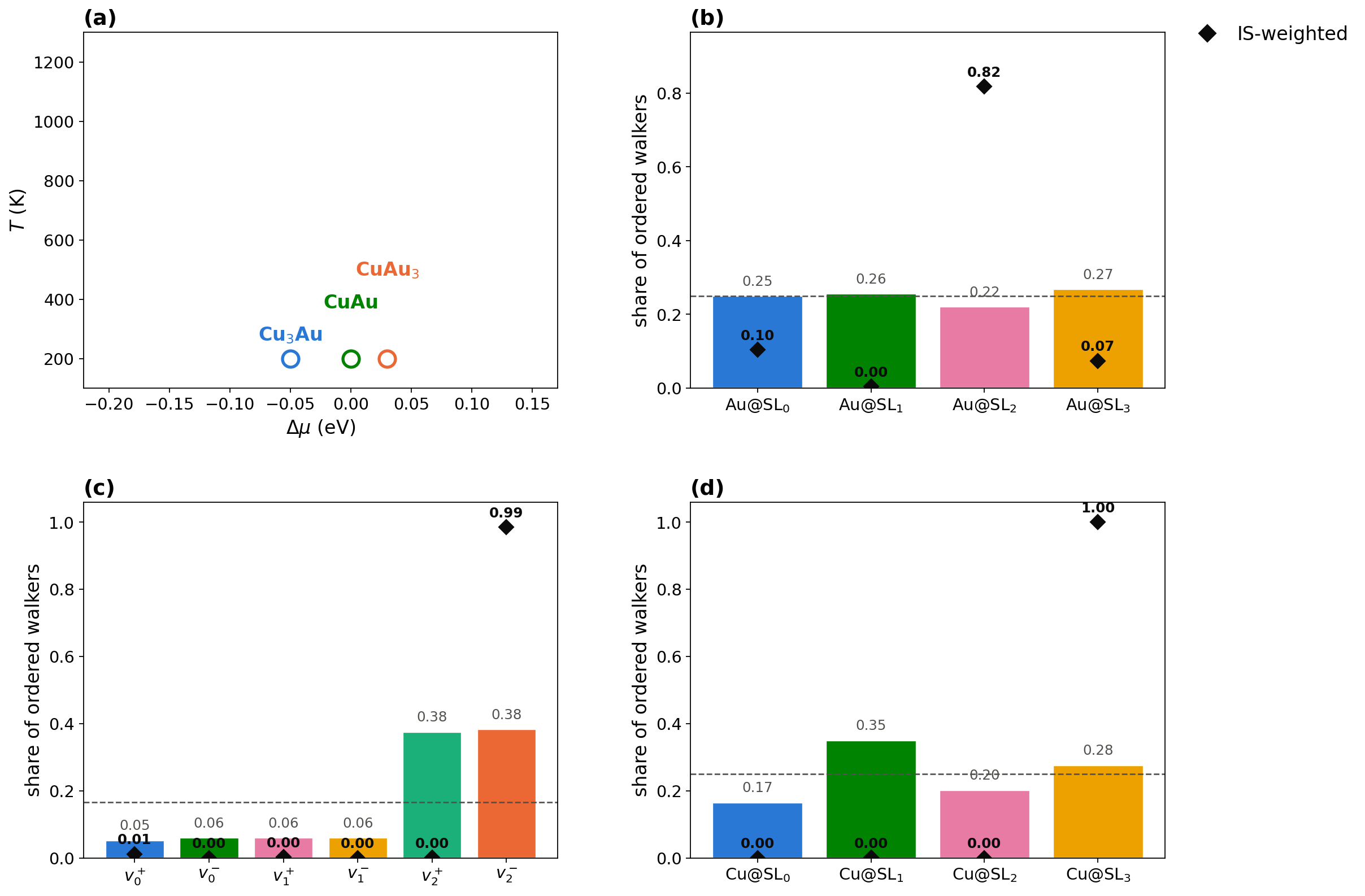}
\caption{Deepest-ordered-cell diagnostic for the uniform-prior CuAu ablation. \textbf{(a)}~Location in
$(\Delta\mu,T)$ of each phase's most-ordered grid cell (most gated samples, gating criterion explained in Table~\ref{tab:chi2}).
\textbf{(b--d)}~At each anchor, bars are the share of
unweighted ordered samples landing in each of the $g$ degenerate ordering variants of
that phase; black diamonds are the share after applying the $e^{A}$ importance weight. The bars sit close to
flat $1/g$ (dotted line) and all $g$ variants are populated everywhere, so
the raw sampler output does not collapse. On the other hand, the importance-weighted distribution is effectively single-variant at all three points.}
\label{fig:modecollapse-nl2enc1-uniform}
\end{figure}

 \subsection{Model and training details}
 \label{sec:training}

\begin{figure}
  \centering
  \includegraphics[width=\linewidth]{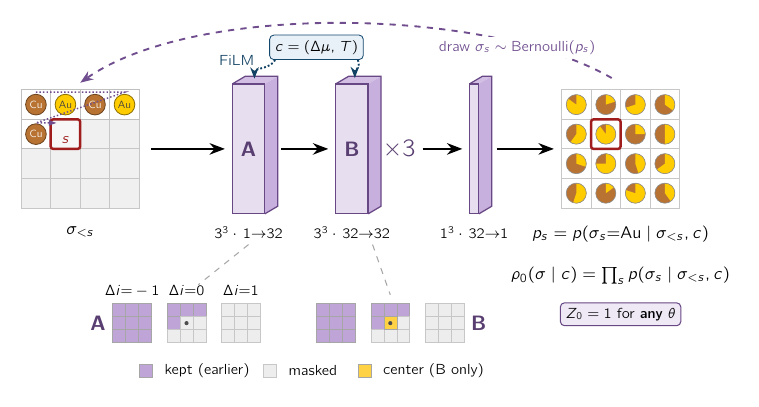}
  \caption{Architecture of the autoregressive prior. A type-A causally masked three-dimensional convolution (center excluded) followed by three type-B layers (center included) and a $1^3\ (32 \rightarrow1)$ head emits a conditional probability. We use zero padding, not circular, to conceal future sites.}
  \label{fig:auto}
\end{figure}

We first optimize the autoregressive part of our network by comparing the training-time statistics (training loss and corrected standard deviation of the log-weights) over the final 10k optimizer steps on the $4\times4\times8$ CuAu system, which has a long-range ordering that the autoregressive part should model (Table \ref{tab:cuau-l8-tail}). The transport part is fixed at $n_\ell=0$ (the number of decoder blocks), $n_\text{enc}=20$ (the number of convolutional layers in a block), and Q-net width 16. Both the train loss and the corrected importance weight standard deviation ($\mathrm{std}[A_{\mathrm{corr}}]$, defined subsequently) are logged once per optimizer step, so each entry summarizes $\sim10^4$ samples. $A_{\mathrm{corr}} = A_1 + F^{\phi}_{1}(c) - F^{\phi}_{0}(c)$ to be comparable across different $(\Delta\mu, T)$. train\_loss is a squared residual with a heavy right tail, so we report the median and the 99th percentile. $\mathrm{std}[A_{\mathrm{corr}}]$ is the per-step spread of the corrected work and is the quantity ESS responds to.

\begin{table}[t]
     \caption{Training time statistics for the $4\times4\times8$ CuAu model over the hyperparameters of FrOGS. The bolded row is the setup we choose.}
  \centering
  \vspace{\baselineskip}
  \begin{tabular}{llcccc}
    \toprule
    & & \multicolumn{2}{c}{train\_loss} & \multicolumn{2}{c}{$\mathrm{std}[A_{\mathrm{corr}}]$} \\
    \cmidrule(lr){3-4}\cmidrule(lr){5-6}
    AR width & $n_{\mathrm{mcmc}}$ & median & p99 & median & p99 \\
    \midrule
    32                     & 0 & 0.45 & 3.79 & 0.99 & 3.75 \\
    \textbf{32}            & \textbf{10} & \textbf{0.59} & \textbf{3.61} & \textbf{0.49} & \textbf{2.36} \\
    32                     & 40 & 0.57 & 3.43 & 0.41 & 1.81 \\
    16                     & 200 & 1.28 & 10.49 & 0.68 & 2.56 \\
    32                     & 200 & 0.58 & 4.23 & 0.38 & 1.60 \\
    64                     & 200 & 0.56 & 4.51 & 0.39 & 1.75 \\
    \bottomrule
  \end{tabular}
  \label{tab:cuau-l8-tail}
\end{table}

For the optimization of the hyperparameters of the transport part of the network, we compare the training-time statistics over the final 10k optimizer steps on the 2D-Ising model with $L=20$ (Table~\ref{tab:ablation}). The autoregressive part is fixed at width 32 and $n_{\mathrm{mcmc}}$ is set equal to 10. The flat + enc10 model had nonfinite loss for 75 of 10000 loss entries, which we excluded when computing the table entries. 

We choose the pyramid architecture with 3 encoders plus 2 decoders ($n_\ell=2$) and the autoregressive model width of 32 based on these experiments. We select $n_\text{mcmc}=10$ for training. The autoregressive network is registered as a second Adam parameter group with a slower step size ($3\times10^{-4}$ for the transport and free-energy networks,
  $1\times10^{-4}$ for the prior, in every experiment reported here). We use $k=125$ discretization steps from $t=0$ to $1$.

\begin{table}[t]
    \caption{Training time statistics for the $L=20$ 2D Ising model over the hyperparameters of the transport part of FrOGS. The bolded row is the setup we choose.}
  \vspace{\baselineskip}
  \centering
  \begin{tabular}{llcccc}
    \toprule
    & & \multicolumn{2}{c}{train\_loss} & \multicolumn{2}{c}{$\mathrm{std}[A_{\mathrm{corr}}]$} \\
    \cmidrule(lr){3-4}\cmidrule(lr){5-6}
    Arm & $n_\text{enc}$ & median & p99 & median & p99 \\
    \midrule
    \textbf{pyramid} (\boldmath$n_\ell=2$) & \textbf{1} & \textbf{0.80} & \textbf{4.09} & \textbf{0.53} & \textbf{6.86} \\
                           &  2 & 0.88 & 6.33 & 0.56 & 3.49 \\
                           &  5 & 1.60 & 9.78 & 0.75 & 5.38 \\
    \midrule
    flat ($n_\ell=0$)      &  2 & 1.11 & 5.79 & 0.67 & 2.87 \\
                           &  5 & 1.38 & 7.64 & 0.71 & 7.05 \\
                           & 10 & 6.37 & $1.44\times10^{20}$ & 2.49 & 9877.04 \\
                           & 20 & 2.09 & 40.04 & 0.83 & 8.51 \\
    \bottomrule
  \end{tabular}
  \label{tab:ablation}
\end{table}

\subsection{Multiscale 2D Ising result comparison}
\label{sec:multiL}

The learned sampler reproduces 2D-Ising thermodynamics at every cell size. The partition function tracks the Kaufman value to within $2.5 \times 10^{-4}$ per site across the full temperature range (Fig.~\ref{fig:multiL}b). The mass balance sits near 0.5 at all three sizes (Fig.~\ref{fig:multiL}c) due to the importance weights correcting the one-sided raw counts, and $G(r)$ at criticality follows the same-size Wolff reference mostly within $2\sigma$ for all scales (Fig.~\ref{fig:multiL}d). Panel (a) should not be interpreted as a quality ranking since the effective sample size at fixed M falls with N for any sampler, the log-weight variance being extensive. Dividing that out gives $\text{Var}[A]/N =0.014$, 0.032 and 0.016 at $T_c$, which is not monotonic in L.

\begin{figure}
    \centering
    \includegraphics[width=0.8\linewidth]{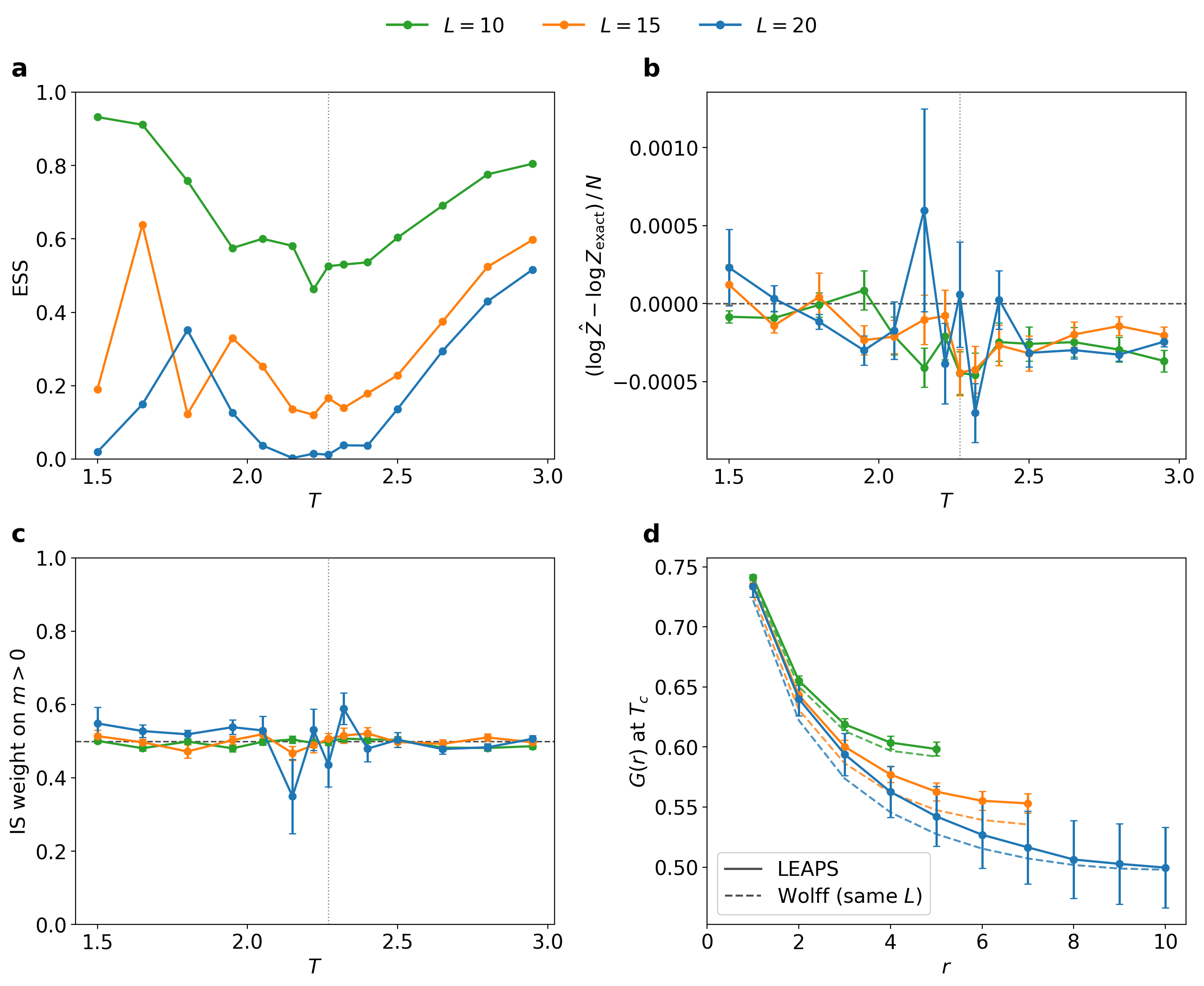}
    \caption{2D Ising at $L = 10, 15, 20$. FrOGS models with the same hyperparameters trained separately at each lattice size and evaluated on the h = 0 line. (a) ESS vs. $T$. (b) Partition-function error per site against the exact finite-$L$ Kaufman value. (c) Importance-weighted mass on $m > 0$, which is compared to the 0.5 reference (dotted line). (d) Correlation function at $T_c$ against a Wolff reference on the same lattice (dashed). $M = 5000$ walkers per temperature and no MCMC mixed in during evaluation.}
    \label{fig:multiL}
\end{figure}

\subsection{Sensitivity analysis of FrOGS phase count}
\label{sec:sensitivity}

We conduct a sensitivity analysis of FrOGS with the baseline discrete neural sampler SEGAL~\citep{damewood2022sampling}, showing our model discovers the CuAu$_3$ phase robustly regardless of the tie-line construction threshold of our phase diagram constructor (\S\ref{sec:phasediagram}). The phase diagram constructor certifies a convex-hull gap of $f(x)$ as a tie-line if it fulfills three conditions: a width of at least
four lattice steps, a chord depth exceeding three standard errors (or a measured absence), and an $x$-overlap with a certified gap at one temperature below, anchored at the lowest temperature. Thus, the current analysis has two threshold levers of $w_{\min}=4$ and $n_\sigma=3$, plus two additional rules of the measured absence and persistence (having an $x$-overlap with the row below). 

We count the number of the phases of the samples generated by both FrOGS and SEGAL~\citep{damewood2022sampling}, while sweeping across
  $(w_{\min}, n_\sigma)\in\{2,\dots,10\}\times \{1,\dots,6\}$, three different
  persistence rules, the measured-absence route on or off, and the bin floor for the $\chi^2$ statistic $W\in\{0,2,3,5,8,10,20\}$ of \S\ref{sec:phasediagram}.
  This results in $2268$ settings per sampler. We count phases as the number of compositions at which one tie-line ends and the next begins. A composition is admitted if it chains unbroken from the lowest temperature row, drifting by at most two lattice steps per row. Under this counting scheme, FrOGS resolves three ordered compounds and SEGAL two across all
settings, the difference being at
$x_\text{Au}=\tfrac34$. 

Additionally, we use four different seeds for training FrOGS models for the CuAu system and report the phase diagrams from each sampler in Fig.~\ref{fig:seeds}. This demonstrates that the phase diagram drawn with FrOGS agreeing with metadynamics was not a single lucky event. All four seeds resolve the three ordered compounds at $x_{\mathrm{Au}}=\tfrac14,\tfrac12,\tfrac34$ and match the metadynamics phase boundaries closely.

\begin{figure}
    \centering
    \includegraphics[width=\linewidth]{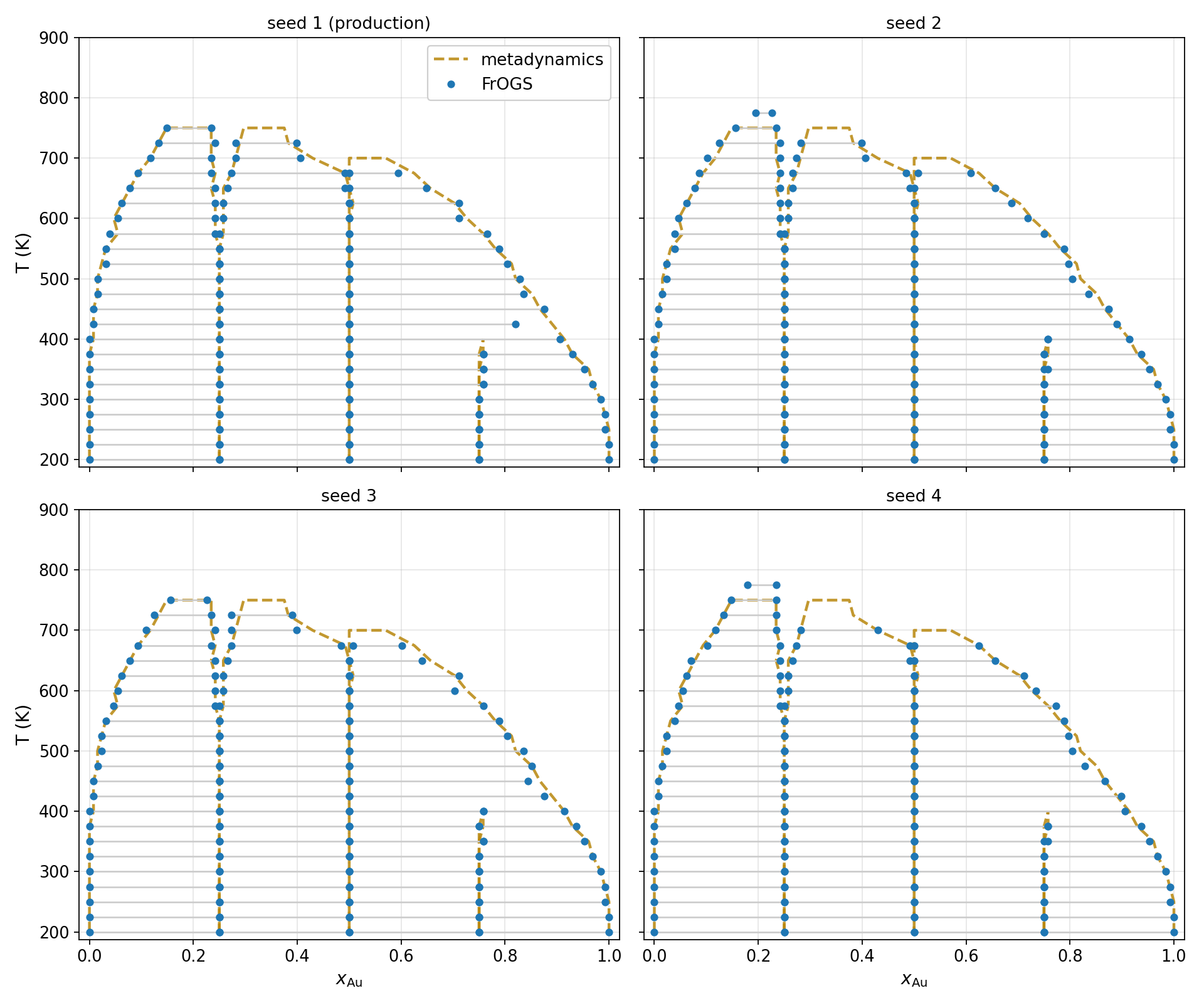}
    \caption{Phase diagram of 128-site CuAu, drawn with FrOGS initialized with four different seeds.}
    \label{fig:seeds}
\end{figure}

\subsection{FrOGS vs. conventional Monte Carlo}
  \label{sec:compute}

  \paragraph{FrOGS.} The CuAu model of \S\ref{sec:cuau} trains for $2\times10^{5}$ steps
  in $4.5$ GPU-hours on one NVIDIA A100. Sweeping over the conditions takes $75$ s per condition
  for $M=5000$ walkers over a $125$-step CTMC on an A100, equivalent to $31$ GPU-hours for the
  $36\times41$ grid.

  \paragraph{Conventional Monte Carlo.} We implemented a single-site semi-grand canonical
  Metropolis kernel on the same cluster expansion and 128-site cell, using the
  standard local update in which only the cluster occurrences touching the flipped site
  are summed. Its $\Delta U_{\mathrm{CE}}$ agrees with our reference implementation to
  $6\times10^{-17}$. It sustains $3.2\times10^{6}$ attempted flips per second on one CPU
  core, and that rate is flat to $\pm15\%$ over the whole $(\Delta\mu,T)$ box. 
  
The integrated autocorrelation time is around $N=128$ attempted flips in the disordered region and at least $5\times10^{6}$ attempted flips inside the low-temperature two-phase region. Charging each condition the passes it needs for a common $10^{-3}$ standard error on $x$, rows we measured at $T\geq575$ K cost $0.32$ core-hours over $72$ conditions, while four conditions at $T\leq450$ K cost $15$ core-hours between them. If we go lower, the system cannot equilibrate. At $200$ K, seven of $36$ chemical potentials give chains that freeze into different basins, with $x$ differing between seeds by as much as $0.14$ while each chain reports its own error bar as below $10^{-5}$, which is a small error bar on a wrong answer. Anchoring one chain per phase, as the standard protocol does, removes this and returns the grid to a few core-hours, at the price of supplying the phases in advance.

  \paragraph{The two methods struggle in different places.} At the eight conditions where
  the FrOGS ESS is lowest, all of them on the CuAu order--disorder line near
  $\Delta\mu\!\approx\!0$, the chain remains ergodic: six
  seeds agree, and the costliest needs $210$ core-seconds
  (Table~\ref{tab:compute}). At ten conditions where the chain is non-ergodic,
  FrOGS holds ESS between $0.55$ and $0.97$, median $0.85$. Neither method
  inherits the other's worst case.

  Finally, we remark that even in the small regions where FrOGS has low ESS, the free energy is still recovered by pooling precision-weighted estimates from neighboring conditions (\S\ref{sec:phasediagram}), so phase boundaries can be resolved.

  \begin{table}[t]
    \caption{Where each sampling framework struggles the most, and what the other does there. The table lists the eight
      conditions of lowest FrOGS ESS, with the Metropolis integrated autocorrelation time in passes (1 pass is 128 attempted flips) and the core-seconds needed to reach $\mathrm{SE}(x)=10^{-3}$.}
    \vspace{\baselineskip}
    \centering
    \begin{tabular}{rrrrr}
      \toprule
      $\Delta\mu$ [eV] & $T$ [K] & FrOGS ESS & $\tau_{\mathrm{int}}$ [passes] & core-s \\
      \midrule
      $+0.010$ & 625 & 0.0003 &   1.3 &   0.02 \\
      $-0.010$ & 625 & 0.0003 & 426.8 & 210.5 \\
      $+0.010$ & 675 & 0.0011 &  12.3 &   1.07 \\
      $0.000$ & 675 & 0.0049 &   3.4 &   0.23 \\
      $+0.010$ & 700 & 0.0064 &   4.9 &   0.82 \\
      $0.000$ & 625 & 0.0089 &   7.7 &   0.14 \\
      $-0.010$ & 675 & 0.0158 &  54.6 &  24.6 \\
      $-0.010$ & 725 & 0.0272 &   5.1 &   1.18 \\
      \bottomrule
    \end{tabular}
    \label{tab:compute}
  \end{table}

\end{document}